\def\easycyrsymbol#1{\mathord{\mathchoice
  {\mbox{\fontsize\tf@size\z@\usefont{T2A}{cmr}{m}{n}#1}}
  {\mbox{\fontsize\tf@size\z@\usefont{T2A}{cmr}{m}{n}#1}}
  {\mbox{\fontsize\sf@size\z@\usefont{T2A}{cmr}{m}{n}#1}}
  {\mbox{\fontsize\ssf@size\z@\usefont{T2A}{cmr}{m}{n}#1}}
}}
\newcommand{\Ya}{\easycyrsymbol{\CYRYA}}
\algnewcommand{\algorithmicand}{\textbf{and}}
\algnewcommand{\algorithmicor}{\textbf{or}}
\algnewcommand{\FOR}{\algorithmicfor}
\algnewcommand{\OR}{\algorithmicor}
\algnewcommand{\AND}{\algorithmicand}
\algnewcommand{\DO}{\algorithmicdo}
\algnewcommand{\CommentLine}[1]{\(\triangleright\) \emph{\small #1}}
\algnewcommand{\InlineFor}[2]{\algorithmicfor\ #1\ \algorithmicdo\ #2} 
\algnewcommand{\InlineIf}[2]{
  \algorithmicif\ #1\ \algorithmicthen\ #2}
\algrenewcommand\Call[2]{\nameref{#1}\ifthenelse{\equal{#2}{}}{}{\ensuremath{(#2)}}}%
\newcommand{\algoCaptionLabel}[2]{%
     \caption[\textproc{#1}]{\textproc{#1}\ifthenelse{\equal{#2}{}}{}{$(#2)$}}%
     \label{algo:#1}%
     }%
\newtheorem{theorem}{Theorem}[section]
\newtheorem{lemma}[theorem]{Lemma}
\newtheorem{corollary}[theorem]{Corollary}
\newtheorem{definition}[theorem]{Definition}
\newtheorem{remark}[theorem]{Remark}
\numberwithin{equation}{section}
\newcommand\ifnull[3]{%
  \ifx\null#1%
    #2%
  \else%
    #3%
  \fi}
\newcommand{\ZZ}{\mathbb{Z}}
\newcommand{\field}{\mathbb{F}_q}
\newcommand{\ffield}{F}
\newcommand{\zring}{\ffield[z]} 
\newcommand{\xring}{\field[x]} 
\newcommand{\genbyuniv}[1]{\langle #1 \rangle_{\xring}} 
\newcommand{\bigO}{\mathcal{O}}
\newcommand{\softO}{\tilde \bigO}
\newcommand{\expmm}{\omega} 
\newcommand{\places}{\mathbb{P}}
\newcommand{\Pinf}{P_{\infty}}
\newcommand{\val}[1][\null]{\ifnull{#1}{v_{\Pinf}}{v_{#1}}}
\newcommand{\y}[1][\null]{\ifnull{#1}{y}{y^{(#1)}}} 
\renewcommand{\L}{\mathcal{L}} 
\newcommand{\dimL}{l}
\newcommand{\code}{\mathcal{C}_{\L}(D,G)}
\renewcommand{\r}{\vec{r}}
\newcommand{\M}{\mathcal{M}_{s,\ell,\r}} 
\newcommand{\matM}{\mat{B}_{s,\ell,\r}} 
\newcommand{\smatM}{\mat{M}_{s,\ell,\r}} 
\newcommand{\pfM}{\mat{P}_{s,\ell,\r}} 
\renewcommand{\vec}[1]{\boldsymbol{#1}}
\newcommand{\mat}[1]{\vec{#1}}
\renewcommand{\H}{\mathcal{H}} 
\newcommand{\N}{\mathcal{N}}
\newcommand{\mdim}{m} 
\newcommand{\xmatspace}[2]{\xring^{#1 \times #2}} 
\newcommand{\zmat}{\mat{0}} 
\newcommand{\idmat}{\mat{I}} 
\newcommand{\shift}{\boldsymbol{d}} 
\DeclareMathOperator{\supp}{supp}
\begin{document}

\title{Faster List Decoding of AG Codes}

\author{Peter Beelen and Vincent Neiger%
\thanks{Peter Beelen is with the Department of Applied Mathematics and Computer Science, Technical University of Denmark, 2800 Kongens Lyngby, Denmark (e-mail: pabe@dtu.dk).}%
\thanks{Vincent Neiger is with Sorbonne Université, CNRS, LIP6, F-75005 Paris, France (email: vincent.neiger@lip6.fr).}%
}

\markboth{Author version, dated March 10, 2025}{Beelen \& Neiger: Faster List Decoding of AG Codes}

\maketitle

\begin{abstract}
  In this article, we present a fast algorithm performing an instance of the
  Guruswami-Sudan list decoder for algebraic geometry codes. We show that any
  such code can be decoded in $\softO(s^2\ell^{\expmm-1}\mu^{\expmm-1}(n+g) + \ell^\expmm \mu^\expmm)$
  operations in the underlying finite field, where $n$ is the code length, $g$
  is the genus of the function field used to construct the code, $s$ is the
  multiplicity parameter, $\ell$ is the designed list size and $\mu$ is the
  smallest positive element in the Weierstrass semigroup of some chosen place.
\end{abstract}

\begin{IEEEkeywords}
Algebraic geometry codes, efficient list decoding, Guruswami-Sudan algorithm.
\end{IEEEkeywords}

\section{Introduction}

\emph{Context and main result.}
Algebraic geometry (AG) codes form a large class of error-correcting codes that
became famous for providing asymptotically good families of codes surpassing
the Gilbert-Varshamov bound. Such codes are constructed using algebraic curves
defined over a finite field, say $\field$ where $q$ is the cardinality of the
field. Instead of considering algebraic curves defined over $\field$, one can
also use the language of function fields with full constant field $\field$; we
follow the latter viewpoint in this article. \Cref{sec:preliminaries} gives
some more background on AG codes and function fields; for a detailed
introduction, the reader may refer to \cite{stichtenoth_algebraic_2009}.

Decoding algorithms for AG codes have been studied since the late 80's. One
important such decoder is the well-known Guruswami-Sudan (GS) list-decoder,
that can be used to decode any AG code.

Consider an AG code constructed from a function field $\ffield$ of genus $g$,
with underlying finite field $\field$ and code length $n$. The design of the GS
decoder further asks that one chooses a \emph{list size} $\ell$ and
\emph{multiplicity} parameter $s$, which are such that $s \le \ell$. It was
shown in \cite{BRS2022} that such an AG code can be decoded using a particular
instance of the GS decoder using $\softO(s\ell^{\expmm}\mu^{\expmm-1}(n+g))$
operations in $\field$, where $\mu$ is the smallest positive element in the
Weierstrass semigroup at some chosen rational place $\Pinf$ of $\ffield$. Here,
the ``soft-O'' notation $\softO(\cdot)$ is similar to the ``big-O'' notation
$\bigO(\cdot)$, but hides factors logarithmic in the parameters
\(s,\ell,\mu,n,g\). This complexity result is achieved under the mild
assumption that one has already carried out some precomputations, yielding
objects which depend on the code but not on the received word. These objects
can therefore be reused as such, with no additional computation, in each
subsequent call to the decoder (see \cite[Section~VI]{BRS2022} for more
details).

To the best of our knowledge, this complexity result is the best known one for
this decoding task in general. More precisely, any other implementation of the
GS decoder has a complexity bound which is similar or worse, with one
exception in the specific context of Reed-Solomon (RS) codes. Indeed, in this case
the above complexity from \cite{BRS2022} becomes $\softO(s\ell^{\expmm}n)$,
while there are known list decoders for these codes whose complexity is in
$\softO(s^2\ell^{\expmm-1}n)$ \cite[Section~IV]{chowdhury_faster_2015}
\cite[Sections~2.4~to~2.6]{jeannerod_computing_2017}; recall that \(s \le
\ell\).

The main goal of this paper is to refine the exploitation of efficient
univariate polynomial matrix computations in the algorithmic framework from
\cite{BRS2022}. We will show that this indeed is possible, leading to our main
result: any AG code can be list decoded, using an instance of the GS list
decoder, in complexity $\softO(s^2\ell^{\expmm-1}\mu^{\expmm-1}(n+g) +
\ell^\expmm \mu^\expmm)$. This complexity bound holds under mild assumptions
about precomputations, similar to those of \cite{BRS2022} mentioned above.
Since \(\mu \le g+1\), the term \(\ell^\expmm \mu^\expmm\) is in
\(\bigO(\ell^\expmm \mu^{\expmm-1} g)\), and therefore this new complexity
bound improves upon the best previously known bound
$\softO(s\ell^{\expmm}\mu^{\expmm-1}(n+g))$. Moreover, for Reed-Solomon codes
one has \(\mu = g+1 = 1\) and also \(\ell \le s n\), so that the new complexity
bound becomes $\softO(s^2\ell^{\expmm-1}n)$, matching the best previously known
bound in this specific case.

\smallskip
\emph{Overview of the approach.}
The GS list decoding algorithm consists of two main steps: the interpolation
step, in which one seeks a polynomial $Q(z) \in \ffield[z]$ satisfying certain
interpolation properties; and the root finding step, in which one computes
roots of the polynomial $Q(z)$. The second step is generally considered as
computationally easier than the first step.
In this paper, we keep the root finding algorithm described in
\cite[Algorithm~6]{BRS2022}, yet with a minor refinement of the complexity
analysis to ensure that it does not become the dominant step after our
improvement of the interpolation step. Specifically,
\cref{sec:decoder:rootfinding} shows a simple modification of the analysis from
\cite{BRS2022} leading to the complexity estimate
$\softO(s\ell\mu^{\expmm-1}(n+g))$, improving upon the one
$\softO(\ell^2\mu^{\expmm-1}(n+g))$ reported in \cite{BRS2022}.

We also keep the overall structure of the interpolation step \cite[Algorithm~7,
Steps 1 to 9]{BRS2022}, which performs two main tasks: first build a basis
\(\mat{B}\) of some \(\xring\)-module of interpolant polynomials, and then find
a small degree such interpolant \(Q(z)\) thanks to a suitable \(\xring\)-module
basis reduction procedure. The main novel ideas for obtaining our result are
the following:
\begin{itemize}
  \item
    For applying basis reduction to \(\mat{B}\), we rely on the algorithm from
    \cite{NeigerVu2017} for computing so-called shifted Popov forms. The same
    choice was made in \cite{BRS2022}, where it was motivated by the fact that
    this algorithm supports any shift, whereas earlier similarly efficient
    algorithms
    \cite{giorgi_complexity_2003,sarkar_normalization_2011,gupta_triangular_2012}
    focus on the unshifted case. Here, we have an additional motivation for
    this choice: a key towards our complexity improvement lies in the fact that
    the complexity of this basis reduction algorithm is sensitive to some type
    of \emph{average degree} of the input polynomial matrix.
  \item
    We describe a new algorithm to build a polynomial matrix $\mat{B}$ whose
    average degree is small, and whose rows generate all possible interpolating
    polynomials $Q(z)$. Our construction directly provides a matrix whose rows
    are \(\xring\)-linearly independent, whereas the matrix built in
    \cite{BRS2022} has redundant rows and therefore requires additional
    computations to obtain a basis of its \(\xring\)-row space, which
    furthermore typically does not have small average degree. In our case, this
    small average degree is ensured through the combination of two ingredients.
    The first one is a new description of a generating set of the module of
    interpolants (see
    \crefrange{sec:interpolant_module:generators}{sec:interpolant_module:first_Fqx_basis})
    which leads to a matrix \(\mat{B}\) with many zero entries in each row
    (like in \cite{BRS2022}), and also such that the nonzero entries are
    restricted to the first \(\mu s\) columns (unlike in \cite{BRS2022} where
    they can be found in all columns). The second ingredient is an iterative
    computation of blocks of rows of \(\mat{B}\), avoiding any degree growth at
    each stage via the computation of matrix remainders in polynomial matrix
    divisions by well-chosen matrix quotients (see
    \cref{sec:interpolant_module:small_Fqx_basis,sec:algorithms:construct_Fqx_basis}).
  \item
    A core tool in our construction of \(\mat{B}\) is a generalization of
    \cite[Algorithm~4]{BRS2022} which, given some function \(a \in \Ya(A)\),
    finds a polynomial matrix representation of the multiplication map $f \in
    \Ya(B) \mapsto af \in \Ya(A+B)$ (see \cref{sec:preliminaries} for
    definitions and notation). The version in \cite{BRS2022} was for \(A=0\),
    and we show how to generalize it to any divisor \(A\) without impacting the
    asymptotic complexity.
\end{itemize}

\smallskip
\emph{Outline.}
\cref{sec:preliminaries} presents the main definitions and preliminary results
used throughout the paper. \cref{sec:interpolant_module} focuses on bases for
the module of interpolant polynomials, starting with a versatile description of
a family of such bases, then showing a polynomial matrix representation of an
explicit choice of such a basis, and finally gathering some properties that
constructively prove the existence of a basis matrix \(\mat{B}\) with small
average degree. \cref{sec:algorithms} describes the above-mentioned
generalization of \cite[Algorithm~4]{BRS2022}, and a complete algorithm for
efficiently constructing \(\mat{B}\). Finally, \cref{sec:decoder} summarizes
the resulting list decoder and proves the announced overall complexity bound.

\smallskip
\emph{Perspectives.}
After this work, the obvious perspective is to seek further complexity
improvements beyond $\softO(s^2\ell^{\expmm-1}\mu^{\expmm-1}(n+g) + \ell^\expmm
\mu^\expmm)$.  Remark that any improvement concerning the exponents of
\(\ell\), \(s\), or \(n\) would directly imply an improvement of the
state-of-the-art complexity for the case of Reed-Solomon codes; a perhaps more
accessible target would be to reduce the dependency on the genus \(g\) or on
the quantity \(\mu\).  Although our emphasis here is on the complexity of the
decoder for a fixed code, allowing to perform some precomputations that depend
only on this code, another natural direction for further work is to carry out a
complexity analysis for these precomputations. This involves notably the
computation of Ap\'ery systems as introduced in \cite{lee_unique_2014}, which
relates directly to active research topics such as the computation of bases of
Riemann-Roch spaces (see for example \cite{Hess2002} and the literature
overview in \cite[Section~7]{BerardiniCouvreurLecerf2022}).

\section{Preliminaries}
\label{sec:preliminaries}

In this section we review some necessary concepts and notations about function
fields, AG codes, the GS list decoder, and algorithms for polynomial matrices
(i.e.\ matrices over $\xring$). We largely use the same notation as in
\cite{BRS2022} and definitions from \cite{stichtenoth_algebraic_2009}.

\subsection{Function fields and AG codes}
\label{sec:preliminaries:ff_agcodes}

Let a function field $\ffield$ of genus $g$ and full constant field $\field$ be
given. A divisor $A=\sum_i n_i A_i$ of $F$ is a formal $\ZZ$-linear combination
of places $A_i$ of $F$, such that finitely many of these places have a nonzero
coefficient. Then the support of $A$, denoted by $\supp(A)$ is the set of all
places $A_i$ of $F$ such that $n_i \neq 0$. A divisor $A$ is called effective
if for all $i$ it holds $n_i \ge 0$. This is commonly denoted by $A \ge 0$. The
degree of a place $A_i$ of $F$ is defined as the dimension of the residue field
$F_{A_i}$ of the place $A_i$, viewed as an \(\field\)-vector space. If a place
of $F$ has degree one, it is called a rational place of $F$. The degree of a
divisor $A=\sum_i n_i A_i$, is then simply defined as $\deg(A)=\sum_i n_i
\deg(A_i),$ where $\deg(A_i)$ denotes the degree of the place $A_i$.

The Riemann-Roch space of a divisor $A$ is given by
\[
  \L(A)=\{f \in \ffield\setminus\{0\} \mid (f)+A \ge 0\} \cup \{0\} ,
\]
where $(f)$ denotes the divisor of $f$.
Divisors of nonzero functions are
called principal divisors. The Riemann-Roch space $\L(A)$ is a finite
dimensional vector space over $\field$, whose dimension will be denoted by
$\dimL(A)$. The dimension of $\L(A)$ is the topic of the theorem of
Riemann-Roch \cite[Theorem~1.5.15]{stichtenoth_algebraic_2009}. In particular,
it implies that $\dimL(A) \ge \deg(A)+1-g$ and that equality holds whenever
$\deg(A) \ge 2g-1$. Moreover $\dimL(A)=0$ if $\deg(A)<0$ since the degree of a
principal divisor is zero.

Now let $P_1,\dots,P_n$ be distinct rational places of $F$ and write
$D=P_1+\cdots+P_n$. Given any divisor $G$ such that $\supp(D) \cap \supp(G) =
\emptyset$, one defines the AG code
\[
  C_{\L}(D,G)=\{(f(P_1),\dots,f(P_n)) \mid
f \in \L(G)\},
\]
where $\L(G)$ denotes the Riemann-Roch space of the divisor $G$. The dimension of the code equals
$\dimL(G)-\dimL(G-D)$, the functions in $\L(G-D)$ being
precisely all functions in $\L(G)$ that give rise to the zero codeword. In particular, $\code$ is the
zero code if $\deg(G) <0$. Moreover, the theorem of Riemann-Roch implies that
$\dim(\code)=n$, i.e.\ $\code=\field^n$, whenever $\deg(G) \ge n+2g-1$. Because
of this, we may assume $0 \le \deg(G) \le n+2g-1$ and in
particular $\deg(G) \in \bigO(n+g)$.

Further we denote by $\Pinf$ an additional rational place of $F$ not occurring
in the divisor $D$. (At first sight this seems to restrict the length of the AG
code $C_ {\L}(D,G)$, since apparently not all rational places can occur in $D$,
but as explained in \cite[Section~II.B]{BRS2022} this is not the case: if
needed, a small degree extension of the constant field will always produce
``new'' rational places from which $\Pinf$ can be chosen; a similar observation
about $\Pinf$ was made independently in \cite{Lee2022}.) We denote by $\mu \in
\ZZ_{>0}$ the smallest positive element in the Weierstrass semigroup of $\Pinf$
and by $x\in F$ a function that has pole order $\mu$ at $\Pinf$, but otherwise
is without poles. Note that \(\mu \le g+1\), since the Weierstrass semigroup
has \(g\) gaps; in fact \(\mu \le g\) as soon as the set of these gaps is not
\(\{1,\ldots,g\}\). As explained in \cite[Section~II.B]{BRS2022}, again up to a
small degree extension of the constant field if necessary, we may ensure that
\(x\) is a local parameter for a rational place \(P_0\) of \(F\) not in
\(\supp(G)\); this is is useful for the root-finding step.

For any divisor $A$ of $\ffield$, let $\Ya(A) =
\bigcup_{m=-\infty}^{\infty}\L(m\Pinf + A)$ and let $\Ya = \Ya(0)$. As in
\cite{lee_unique_2014}, for any nonzero $a \in \Ya(A)$ we denote by $\delta_A
(a)$ the smallest integer $m$ such that $a \in \L(m\Pinf + A)$, i.e.\
$\delta_A(a) = -\val(a) - \val(A)$ and let $\delta(a) = \delta_0(a) =
-\val(a)$. We will take as convention that $\delta_A(0) = - \infty$. Note that
for any $a \in \Ya(A)$ and $b \in \Ya(B)$, one has
$\delta_{A+B}(ab)=\delta_A(a)+\delta_B(b)$. We will use the quantity
$\delta_A(a)$ to indicate the ``size'' of an element $a \in \Ya(A)$; it generalizes
the degree of a univariate polynomial.  For example, the following known
result, see for example \cite[Lemma~V.3]{BRS2022} for a proof, indicates the
size of interpolating functions in $\Ya(A)$:

\begin{lemma}[{\cite[Lemma~V.3]{BRS2022}}]
  \label{lem:interp_size}
  Let $A$ be a divisor and $E = E_1 + \cdots + E_N$ for distinct rational
  places $E_1,\dots,E_N$ of $F$ different from $\Pinf$ such that $\supp(A) \cap
  \supp(E)= \emptyset$. For any $(w_1,\dots,w_N) \in \field^N$ there exists an
  $a \in \Ya(A)$ with
  \[
    \delta_A(a) \leq \deg(E) + 2g-1 - \deg(A)
  \]
  such that $a(E_j) = w_j$ for $j = 1,\dots,N$.
\end{lemma}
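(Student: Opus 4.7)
The plan is to exhibit $a$ as a suitable preimage under evaluation at the points $E_1,\dots,E_N$. Concretely, I would set $m = \deg(E) + 2g - 1 - \deg(A) = N + 2g - 1 - \deg(A)$ and consider the $\field$-linear map
\[
  \ev_E \colon \L(m\Pinf + A) \to \field^N, \qquad f \mapsto (f(E_1),\dots,f(E_N)).
\]
This map is well-defined because each $E_j$ is rational and lies outside $\supp(\Pinf) \cup \supp(A)$, so functions in $\L(m\Pinf + A)$ have no poles at any $E_j$ and can be evaluated there into $\field$. The conclusion of the lemma is exactly the surjectivity of this map: a preimage $a$ of $(w_1,\dots,w_N)$ lies in $\L(m\Pinf + A) \subseteq \Ya(A)$ and satisfies $\delta_A(a) \le m$.

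The key step is therefore to verify surjectivity via a dimension count. The kernel of $\ev_E$ consists of the functions in $\L(m\Pinf + A)$ vanishing at each $E_j$; since the $E_j$ are distinct rational places not appearing in $\supp(m\Pinf + A)$, this kernel equals $\L(m\Pinf + A - E)$. With our choice of $m$, the divisor $m\Pinf + A - E$ has degree $m + \deg(A) - N = 2g - 1$, and $m\Pinf + A$ has degree $N + 2g - 1 \ge 2g - 1$. Both degrees thus meet the Riemann-Roch threshold, so the theorem of Riemann-Roch gives the equalities
\[
  \dimL(m\Pinf + A) = m + \deg(A) + 1 - g, \qquad \dimL(m\Pinf + A - E) = 2g - 1 + 1 - g = g.
\]
Subtracting, the rank of $\ev_E$ equals $\dimL(m\Pinf + A) - \dimL(m\Pinf + A - E) = N$, so $\ev_E$ is surjective and a suitable $a$ exists.

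I do not foresee a real obstacle here; the only mild subtleties are bookkeeping ones. One should check that the evaluation map is well-defined in all regimes (including when $m$ is negative and when $N = 0$, the latter being a trivial case using $a = 0$ with the convention $\delta_A(0) = -\infty$), and that both Riemann-Roch applications are justified by the degree bound. The choice $m = \deg(E) + 2g - 1 - \deg(A)$ is in fact the smallest integer making $\deg(m\Pinf + A - E) \ge 2g - 1$, which is what forces both dimensions to be given by equality and the rank-nullity argument to yield exactly $N$.
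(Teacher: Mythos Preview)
Your argument is correct. Note, however, that the paper does not give its own proof of this lemma: it merely quotes it from \cite[Lemma~V.3]{BRS2022}. The approach you outline---choosing $m = \deg(E) + 2g - 1 - \deg(A)$ so that both $m\Pinf + A$ and $m\Pinf + A - E$ have degree at least $2g-1$, then applying Riemann--Roch exactly on both sides of the rank-nullity identity for the evaluation map---is the standard one and is almost certainly what the cited reference does as well. Your bookkeeping remarks (well-definedness of evaluation, the trivial case $N=0$, the irrelevance of the sign of $m$ since only $\deg(m\Pinf + A)$ and $\deg(m\Pinf + A - E)$ matter) are accurate and cover all the edge cases.
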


Since by definition, the function $x$ only has a pole in $\Pinf,$ we have $x
\in \Ya \setminus \field$. Hence, we can view $\Ya(A)$ as a free
$\xring$-module. Following
\cite{lee_unique_2014} and using the same notation as in \cite{BRS2022}, we
consider a specific kind of basis of $\Ya(A)$ as $\xring$-module, called an
Ap\'ery system of $\Ya(A)$.

\begin{definition}
  \label{def:yiA}
  For a divisor $A$ and an integer $i=0,\dots,\mu-1$, let $\y[A]_i \in \Ya(A)$
  be a function satisfying:
  \begin{enumerate}
    \item $\delta_A(\y[A]_i) \equiv i \mod \mu$,
    \item if $a \in \Ya(A)$ and $\delta_A(a) \equiv i \mod \mu$, then
      $\delta_A(\y[A]_i) \leq \delta_A(a)$.
  \end{enumerate}
  Further we define $\y_i = \y[0]_i$.
\end{definition}

Using the theorem of Riemann-Roch, it is not hard to show the following lemma,
see for example \cite[Lemma~III.3]{BRS2022} for details:

\begin{lemma}[{\cite[Lemma~III.3]{BRS2022}}]
  \label{lem:Fx_basis}
  For any divisor $A$ it holds that
  \[
    -\deg(A) \leq \delta_A(\y[A]_i) \leq 2g - 1 - \deg(A) + \mu,
  \]
  for $i = 0,\dots,\mu-1$.
\end{lemma}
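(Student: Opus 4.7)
The plan is to use the theorem of Riemann-Roch to control $\dim \L(m\Pinf + A)$ as a function of $m$, and then read off the existence of elements with the prescribed pole orders at $\Pinf$.

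For the lower bound, I would observe that if $a \in \Ya(A)$ is nonzero and $\delta_A(a) = m$, then by definition $a \in \L(m\Pinf + A)$, so the divisor $m\Pinf + A$ has nonnegative degree (since it bounds the divisor of a nonzero function). Hence $m + \deg(A) \ge 0$, giving $\delta_A(\y[A]_i) \ge -\deg(A)$ for every $i$.

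For the upper bound, set $m_0 = 2g - 1 - \deg(A)$, so that $\deg(m\Pinf + A) \ge 2g-1$ whenever $m \ge m_0$. For such $m$, the theorem of Riemann-Roch gives the equality $\dimL(m\Pinf + A) = m + \deg(A) + 1 - g$. Consequently, for each $m \in \{m_0+1, \ldots, m_0+\mu\}$, one has
\[
    \dimL(m\Pinf + A) - \dimL((m-1)\Pinf + A) = 1,
\]
so there exists an element $a_m \in \L(m\Pinf + A) \setminus \L((m-1)\Pinf + A)$, which by definition of $\delta_A$ satisfies $\delta_A(a_m) = m$.

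The key step is then to note that the $\mu$ integers $m_0+1, m_0+2, \ldots, m_0+\mu$ form a complete set of residues modulo $\mu$. Therefore, for every $i \in \{0,\ldots,\mu-1\}$, at least one of the $a_m$ above has $\delta_A(a_m) \equiv i \pmod{\mu}$, and by the minimality clause in \cref{def:yiA} we conclude
\[
    \delta_A(\y[A]_i) \le m \le m_0 + \mu = 2g - 1 - \deg(A) + \mu.
\]
The only subtle point worth double-checking is that the dimension jump argument is applied to consecutive values both of which are at least $m_0$, which is why I start the range at $m_0+1$ rather than at $m_0$; everything else is a direct bookkeeping of Riemann-Roch.
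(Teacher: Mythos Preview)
Your proof is correct and follows exactly the Riemann-Roch argument the paper alludes to (the paper does not give its own proof but simply refers to \cite[Lemma~III.3]{BRS2022} and the Riemann-Roch theorem). Both the lower bound via the nonvanishing of $\L(m\Pinf+A)$ and the upper bound via the dimension jump at consecutive $m\ge 2g-1-\deg(A)$ covering all residues modulo $\mu$ are the standard steps, and you have handled the boundary $m=m_0$ correctly.
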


As mentioned in \cite{lee_unique_2014}, $\y[A]_0,\dots,\y[A]_{\mu-1}$ is an
$\xring$-basis of $\Ya(A)$. Given $a \in \Ya(A)$, it is therefore possible
to write $a$ as an $\xring$-linear combination of these basis elements, and in
fact the \(\xring\)-coefficients of this combination are unique. As
demonstrated in \cite[Lemma~III.4]{BRS2022}, there is a very explicit upper
bound for the degree of these occurring coefficient polynomials:

\begin{lemma}[{\cite[Lemma~III.4]{BRS2022}}]
  \label{lem:deg}
  If $a = \sum_{i=0}^{\mu-1} a_i\y[A]_i \in \Ya(A)$, where $a_i \in \xring$
  and $A$ is a divisor, then
  \[
    \deg(a_i) \leq \frac{1}{\mu} \Big(\delta_A(a) - \delta_A(\y[A]_i)\Big)
              \leq \frac{1}{\mu} \Big(\delta_A(a) + \deg(A)\Big) \ .
  \]
\end{lemma}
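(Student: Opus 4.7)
The plan is to exploit a simple but crucial arithmetic fact: since $\delta_A(\y[A]_i) \equiv i \pmod{\mu}$ by definition, and since $x$ has pole order exactly $\mu$ at $\Pinf$, multiplying $\y[A]_i$ by a polynomial in $\xring$ shifts its pole order at $\Pinf$ by a multiple of $\mu$, hence preserves its residue class modulo $\mu$. Concretely, for any nonzero $a_i \in \xring$ of degree $d_i$, one has $\delta(a_i) = \mu d_i$, and therefore
\[
  \delta_A(a_i \y[A]_i) = \mu d_i + \delta_A(\y[A]_i) \equiv i \pmod{\mu}.
\]
This is the only ingredient I really need.

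Now I would observe that the nonzero terms $a_i \y[A]_i$ appearing in the sum $a = \sum_{i=0}^{\mu-1} a_i \y[A]_i$ all have pairwise distinct values of $\delta_A$, because their residues modulo $\mu$ are pairwise distinct. Consequently, no two such terms can have matching leading behaviour at $\Pinf$, so no cancellation of the highest-order pole can occur, and
\[
  \delta_A(a) \;=\; \max_{\substack{0 \le i < \mu \\ a_i \neq 0}} \delta_A(a_i \y[A]_i)
             \;=\; \max_{\substack{0 \le i < \mu \\ a_i \neq 0}} \bigl(\mu \deg(a_i) + \delta_A(\y[A]_i)\bigr).
\]
Rearranging this immediately yields the first inequality $\deg(a_i) \le \tfrac{1}{\mu}(\delta_A(a) - \delta_A(\y[A]_i))$ for every index $i$ with $a_i \neq 0$; the case $a_i = 0$ is vacuous with our convention that $\deg(0) = -\infty$.

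For the second inequality, I invoke \cref{lem:Fx_basis}, which gives the lower bound $\delta_A(\y[A]_i) \ge -\deg(A)$. Substituting $-\delta_A(\y[A]_i) \le \deg(A)$ into the bound just obtained produces $\deg(a_i) \le \tfrac{1}{\mu}(\delta_A(a) + \deg(A))$, completing the proof. The only step that requires care is ensuring the no-cancellation argument is clean, i.e.\ that the distinct residues modulo $\mu$ really do force distinct pole orders; but this is immediate from $\delta$ being a valuation.
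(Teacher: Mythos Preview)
Your proof is correct. The paper itself does not supply a proof of this lemma; it is quoted verbatim from \cite[Lemma~III.4]{BRS2022}, so there is no in-paper argument to compare against. That said, your argument is exactly the standard one for Ap\'ery systems: the defining property $\delta_A(\y[A]_i) \equiv i \pmod{\mu}$ together with $\delta(x) = \mu$ forces the summands $a_i \y[A]_i$ to have pairwise distinct pole orders at $\Pinf$, hence no cancellation, and the two inequalities drop out immediately (the second via \cref{lem:Fx_basis}).
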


\subsection{The Guruswami-Sudan list decoder}
\label{sec:preliminaries:gursud}

The key idea in the Guruswami-Sudan list decoding algorithm for $\code$
\cite{guruswami_improved_1999} is to find a polynomial $Q(z)=\sum_{t = 0}^\ell
z^tQ_t \in F[z]$, nonzero of degree at most $\ell$, that vanishes with
multiplicity at least $s$ at each point $(P_i,r_i)$, where $\r=(r_1,\dots,r_n)
\in \field^n$ is the received word. The idea is that if the coefficients $Q_t
\in F$ are chosen in suitable subspaces of $F$ and $\r$ has small enough
Hamming distance from the sent codeword $(f(P_1),\dots,f(P_n))$, then $Q(f)$ is
the zero element in $\ffield$ (see \cref{thm:guruswami_sudan}). This can then
be used to recover $f$ from $Q$ by finding the roots of $Q$ in \(\ffield\).

For the remainder of this paper fix $s,\ell \in \ZZ_{>0}$ with $s \le \ell$,
where $s$ is the multiplicity parameter and $\ell$ the designed list size of
the Guruswami-Sudan list decoder. The corresponding list decoding radius will
be denoted by $\tau$.

More specifically, as in \cite{BRS2022}, we restrict ourselves to the setting
where $Q = \sum_{t = 0}^\ell z^tQ_t$ with $Q_t \in \Ya(-tG)$ and define
$\delta_G(Q) = \max_t \delta_{-tG}(Q_t)$. Further, given a received word
$\r=(r_1,\dots,r_n) \in \field^n$, we define
\begin{align*}
  \M = \bigg\{ & Q = \sum_{t = 0}^{\ell}z^tQ_t \in \ffield[z] \bigm\vert  Q_t \in \Ya(-tG), \text{$Q$ has a} \\
     & \text{root of multiplicity} \ge s \text{ at $(P_j,r_j)$ for all $j$}\bigg\}.
\end{align*}
In this setting, one has the following result, which is the crux of the
correctness of the Guruswami-Sudan list decoder \cite{guruswami_improved_1999}:

\begin{theorem}[Instance of Guruswami-Sudan]
  \label{thm:guruswami_sudan}
  Let $\r\in\field^n$ be a received word and $Q \in \M$ with $\delta_G(Q) <
  s(n-\tau)$. If $f \in \L(G)$ is such that the Hamming distance between $\r$
  and $(f(P_1),\dots,f(P_n))$ is at most $\tau$, then $Q(f)=0$.
\end{theorem}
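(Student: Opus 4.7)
The plan is to argue by contradiction using a divisor-degree argument on the function \(Q(f) \in \ffield\). Assuming \(Q(f) \neq 0\), I will show on one hand that its only pole is at \(\Pinf\), of order strictly less than \(s(n-\tau)\), and on the other hand that its zero divisor has degree at least \(s(n-\tau)\); since every principal divisor has degree zero, this yields the required contradiction.

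First I would check that \(Q(f) = \sum_{t=0}^{\ell} Q_t f^t\) lies in \(\Ya\) and control \(\delta(Q(f))\). From \(Q_t \in \Ya(-tG)\) we have \((Q_t) \ge tG - \delta_{-tG}(Q_t)\,\Pinf\), and from \(f \in \L(G)\) we have \((f^t) \ge -tG\). Adding these divisor inequalities yields
\[
  (Q_t f^t) \ge -\delta_{-tG}(Q_t)\,\Pinf \ge -\delta_G(Q)\,\Pinf
\]
for each \(t = 0,\dots,\ell\). Summing, it follows that \(Q(f) \in \L(\delta_G(Q)\,\Pinf) \subseteq \Ya\) and \(\delta(Q(f)) \le \delta_G(Q) < s(n-\tau)\).

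Next I would exploit the multiplicity condition built into the definition of \(\M\). Let \(T = \{j : r_j = f(P_j)\}\); the Hamming distance hypothesis gives \(|T| \ge n - \tau\). For each \(j \in T\), the standard consequence of \(Q\) having a root of multiplicity at least \(s\) at \((P_j, r_j)\) is \(v_{P_j}(Q(f)) \ge s\); this follows by expanding \(Q(z)\) about \(z = r_j = f(P_j)\) and using that \(f - r_j\) has positive valuation at \(P_j\), together with the Hasse-derivative criterion for multiplicity. Since the \(P_j\) are distinct rational places different from \(\Pinf\), the zero divisor of \(Q(f)\) has degree at least \(s|T| \ge s(n-\tau)\).

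Combining the two bounds finishes the argument: if \(Q(f)\) were nonzero, then its principal divisor would have pole degree at most \(\delta(Q(f)) < s(n-\tau)\) while its zero degree is at least \(s(n-\tau)\), contradicting \(\deg((Q(f))) = 0\). Hence \(Q(f) = 0\). The one nontrivial step is the purely local translation from ``\(Q\) has a root of multiplicity at least \(s\) at \((P_j, r_j)\)'' to ``\(v_{P_j}(Q(f)) \ge s\) when \(r_j = f(P_j)\)''; this is classical but needs careful bookkeeping in the local ring at \(P_j\). Everything else reduces to elementary divisor-degree counting using only the definitions of \(\delta_A\) and \(\Ya(A)\) from the preliminaries.
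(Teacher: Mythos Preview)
Your argument is correct and is precisely the classical Guruswami--Sudan proof: bound the pole order of \(Q(f)\) at \(\Pinf\) via \(\delta_G(Q)\), bound the zero order at each agreement point via the multiplicity condition, and compare. The paper itself does not give a proof of this theorem; it is stated as a known result and attributed to \cite{guruswami_improved_1999}. Your write-up matches the original argument from that reference, so there is nothing to compare against within the paper.

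One small remark: in Step~2 you should also note that each \(Q_t\) is regular at \(P_j\) (since \(Q_t \in \Ya(-tG)\) and \(P_j \notin \supp(G) \cup \{\Pinf\}\)), which is what makes the local expansion at \(P_j\) legitimate; you implicitly use this when invoking the Hasse-derivative criterion.
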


For algorithmic purposes, it is convenient to give a description of $\M$ that
is as explicit as possible. For the remainder of this article let $G_t = (t-s)D
- tG$ for $0 \le t < s$ and \(G_t = -tG\) for \(s \le t \le \ell\). From
\cite{BRS2022} we quote the following:

\begin{theorem}[{\cite[Theorem~IV.4~and~Remark~IV.7]{BRS2022}}]
  \label{thm:M-description}
  Let $R \in \Ya(G)$ be such that $R(P_j) = r_j$ for $1 \le j \le n$. Then
  \begin{align}
    \M
    & = \label{eqn:mbasis_binomial}
        \bigoplus_{t=0}^{\ell} (z-R)^t \Ya(G_t) \\
    & = \label{eqn:mbasis_binomial_shifted}
        \bigoplus_{t=0}^{s-1} (z-R)^t \Ya(G_t) \oplus \bigoplus_{t=s}^{\ell} z^{t-s} (z-R)^s \Ya(G_t).
  \end{align}
\end{theorem}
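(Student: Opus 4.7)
The plan is to rewrite any candidate $Q(z) = \sum_{t=0}^{\ell} z^t Q_t \in \ffield[z]$ of $z$-degree at most $\ell$ in the basis $\{(z-R)^t\}_{0 \le t \le \ell}$ of this $\ffield$-vector space, and translate both conditions defining $\M$ into conditions on the new coefficients $Q'_t$, where $Q(z) = \sum_{t=0}^{\ell} (z-R)^t Q'_t$. The change-of-basis formulas are
\begin{equation*}
  Q'_t = \sum_{j=t}^{\ell} \binom{j}{t} R^{j-t} Q_j
  \qquad \text{and} \qquad
  Q_t = \sum_{j=t}^{\ell} \binom{j}{t} (-R)^{j-t} Q'_j,
\end{equation*}
and since $R \in \Ya(G)$ one has $R^{j-t} \cdot \Ya(-jG) \subseteq \Ya(-tG)$ (and the same with $-R$ in place of $R$). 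Applying these inclusions term by term shows that ``$Q_t \in \Ya(-tG)$ for every $t$'' is equivalent to ``$Q'_t \in \Ya(-tG)$ for every $t$''.

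Next I would transport the multiplicity condition. Since $R(P_j) = r_j$ and $P_j \notin \supp(G) \cup \{\Pinf\}$, the function $r_j - R$ vanishes at $P_j$, so $v_{P_j}(r_j - R) \ge 1$. Expanding $Q$ in the basis $\{(z-r_j)^k\}_k$ via $(z - R) = (z - r_j) + (r_j - R)$ yields $Q(z) = \sum_k (z - r_j)^k \tilde Q_k^{(j)}$ with
\begin{equation*}
  \tilde Q_k^{(j)} = \sum_{t \ge k} \binom{t}{k} (r_j - R)^{t-k} Q'_t,
\end{equation*}
and the standard (Hasse-derivative) definition of multiplicity requires $v_{P_j}(\tilde Q_k^{(j)}) \ge s - k$ for $0 \le k < s$. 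A direct comparison shows this is equivalent to $v_{P_j}(Q'_t) \ge s - t$ for $0 \le t < s$, with no additional constraint for $t \ge s$ (indeed $v_{P_j}(Q'_t) \ge 0$ is automatic there, because $\supp(G) \cap \supp(D) = \emptyset$). The direction ``$\Rightarrow$'' is a straightforward valuation bound, while ``$\Leftarrow$'' is obtained by descending induction on $t$ from $t=s-1$ down to $t=0$, peeling off the leading term in each $\tilde Q_k^{(j)}$. Combined with the first paragraph and $D = P_1 + \cdots + P_n$, these local conditions at every $P_j$ are precisely $Q'_t \in \Ya(G_t)$; the sum is automatically direct since $\{(z-R)^t\}_t$ is $\ffield$-linearly independent in $\ffield[z]$. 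This proves \eqref{eqn:mbasis_binomial}.

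For \eqref{eqn:mbasis_binomial_shifted} only the indices $t \ge s$ change. I would factor $(z-R)^t = (z-R)^s (z-R)^{t-s}$ and expand $(z-R)^{t-s}$ as a polynomial in $z$ with coefficients that are powers of $-R$. Using $R^k \cdot \Ya(-tG) \subseteq \Ya(-(t-k)G)$ for $0 \le k \le t-s$ yields
\begin{equation*}
  \sum_{t=s}^{\ell} (z-R)^t \Ya(G_t)
  \subseteq (z-R)^s \sum_{t=s}^{\ell} z^{t-s} \Ya(G_t),
\end{equation*}
and the reverse inclusion follows symmetrically by writing $z^{t-s} = ((z-R)+R)^{t-s}$ and using $R^k \cdot \Ya(-tG) \subseteq \Ya(-(t-k)G)$ again. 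Directness on the new form is preserved because multiplication by $(z-R)^s$ is injective on $\ffield[z]$ and because $\{z^{t-s}\}_{t \ge s}$ is $\ffield$-linearly independent.

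The main obstacle is the multiplicity translation in the second paragraph: carefully relating the bivariate ``root of multiplicity $\ge s$ at $(P_j, r_j)$'' condition to the divisorial valuation condition $v_{P_j}(Q'_t) \ge s-t$. Everything else is a combination of the invertible $\ffield$-linear change of basis with elementary manipulations exploiting $R \in \Ya(G)$ and the disjointness of $\supp(G)$, $\supp(D)$, and $\{\Pinf\}$.
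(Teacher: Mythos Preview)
This theorem is not proved in the present paper: it is quoted from \cite[Theorem~IV.4 and Remark~IV.7]{BRS2022}, so there is no in-paper proof to compare your attempt against. That said, your argument is a correct self-contained proof. The triangular change of basis between $\{z^t\}$ and $\{(z-R)^t\}$, combined with $R\in\Ya(G)$ and the inclusions $R^{j-t}\Ya(-jG)\subseteq\Ya(-tG)$, does preserve the conditions $Q_t\in\Ya(-tG)$; and your translation of the multiplicity condition via the coefficients $\tilde Q_k^{(j)}$ is the right mechanism (using $v_{P_j}(r_j-R)\ge 1$ and the fact that $P_j\notin\supp(G)\cup\{\Pinf\}$ so all $Q'_t$ are regular at $P_j$). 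One small point: your direction labels appear swapped. The ``straightforward valuation bound'' is the implication from $v_{P_j}(Q'_t)\ge s-t$ to $v_{P_j}(\tilde Q_k^{(j)})\ge s-k$ (each summand in $\tilde Q_k^{(j)}$ has valuation at least $s-k$), whereas the descending induction---isolating $Q'_k$ as $\tilde Q_k^{(j)}$ minus the tail $\sum_{t>k}\binom{t}{k}(r_j-R)^{t-k}Q'_t$---is the converse.

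For \cref{eqn:mbasis_binomial_shifted}, note that the paper does in effect rederive it from \cref{eqn:mbasis_binomial}: it is the special case $f_t(z)=z^{t-s}$ of \cref{thm:bases_of_M}, whose proof uses exactly your binomial rewriting $z^{t-s}=((z-R)+R)^{t-s}$ and its inverse. So on this part your approach and the paper's coincide.
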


In \cite{BRS2022}, the first description (\cref{eqn:mbasis_binomial}) was used
to obtain a decoding algorithm for $\code$ with complexity
$\softO(\ell^{\expmm+1}\mu^{\expmm-1}(n+g))$, while the second description
(\cref{eqn:mbasis_binomial_shifted}) improved this to
$\softO(s\ell^{\expmm}\mu^{\expmm-1}(n+g))$. We will see in
\cref{sec:interpolant_module} that one ingredient in our improvement is to use
yet another description of \(\M\).

\subsection{Reminders on univariate polynomial matrices}
\label{sec:preliminaries:polmat}

In this paper, we will make use of a few classical notions on univariate
polynomial matrices. For brevity, since only the univariate case will be
encountered, we will just write ``polynomial matrix''.

For a polynomial matrix \(\mat{A} = [a_{i,j}]_{i,j} \in \xmatspace{\nu}{m}\),
its \emph{degree} is defined as \(\max_{i,j} \deg(a_{i,j})\) and denoted by
\(\deg(\mat{A})\); its column degree is the tuple formed by the degrees of
each of its columns. In the square case \(m=\nu\), the matrix \(\mat{A}\) is
said to be \emph{nonsingular} if \(\det(\mat{A}) \neq 0\), and
\emph{unimodular} if \(\det(\mat{A}) \in \field\setminus\{0\}\).
For a submodule $\mathcal{M} \subseteq \xring^m$ of rank $\nu$, any of its
$\xring$-bases can be seen as a set of \(\nu\) row vectors in
\(\xmatspace{1}{m}\), which can be gathered as a matrix $\mat{B}$ in
$\xmatspace{\nu}{m}$. This matrix has rank \(\nu\) and its $\xring$-row space
is \(\mathcal{M}\). In particular, if \(\mathcal{M}\) has rank \(m\), then
\(\mat{B}\) is square and nonsingular.

We are interested in finding special bases of such submodules, which have some
degree minimality property. This is realized by the so-called \emph{Popov
normal form} \cite{kailath_linear_1980}, which minimizes the maximum degree of
each basis element and further ensures a type of uniqueness. In our case, we
will need the more general \emph{shifted Popov form}
\cite{beckermann_normal_2006}, where one is allowed to put weights in the
degree measure:
 
\begin{definition}
\label{def:shifted-deg}
  For any tuple $\vec{s} = (s_1,\dots,s_m) \in \ZZ^m$ and any nonzero vector
  $\vec{v} = (v_1,\dots,v_m) \in \xring^m \setminus\{0\}$, the
  \emph{$\vec{s}$-degree} of $\vec{v}$ is defined as
\[
  \deg_{\vec{s}}(\vec{v}) = \max_k \{ \deg(v_k) + s_k \} .
\]
  If $k$ is maximal such that $\deg(v_k) + s_k = \deg_{\vec{s}}(\vec{v})$, then
  we say that $v_k$ is the \emph{$\vec{s}$-pivot} of $\vec{v}$, and $k$ is its
  \emph{$\vec{s}$-pivot index}.
\end{definition}

The tuple \(\vec{s}\) is usually referred to as a \emph{(degree) shift}.
 
\begin{definition}
  For $\vec{s} \in \ZZ^{m}$, a matrix $\mat{P} \in \xmatspace{m}{m}$ with no
  zero row is said to be in \emph{$\vec{s}$-Popov form} if all the
  $\vec{s}$-pivots of its rows lie on the diagonal, are monic, and have degrees
  strictly greater than all other entries in their respective columns.
\end{definition}

In particular, such matrices are nonsingular and can be used for representing
bases of submodules $\mathcal{M} \subseteq \xring^m$ of rank $m$. Shifted Popov
forms have good properties in that regard: for a given \(\mathcal{M}\), there
exists a unique basis of \(\mathcal{M}\) which is in \(\vec{s}\)-Popov form
\cite{beckermann_normal_2006}, and the rows of this \(\vec{s}\)-Popov basis
have minimal \(\vec{s}\)-degree (because it is in particular
\(\vec{s}\)-reduced; see \cite[Section 2.7]{zhou_fast_2012}).  Quite often, as
will be the case in this paper, one seeks the \(\vec{s}\)-Popov form $\mat{P}$
of \(\mathcal{M}\), while already knowing a nonsingular matrix \(\mat{B}\)
which forms a basis of \(\mathcal{M}\). In this context \(\mat{P}\) is the
unique \(\vec{s}\)-Popov form left-unimodularly equivalent to \(\mat{B}\), and
is called \emph{the $\vec{s}$-Popov form of $\mat{B}$}. For computing
\(\mat{P}\) from \(\mat{B}\), we will rely on \cite[Theorem~1.3]{NeigerVu2017}.

In this paper, we will need the case of rational
shifts of the form \(\vec{s} \in \frac{1}{\mu} \ZZ^m\). The above definitions extend
directly to this case, and one can still rely on the algorithms designed for
the integer case (see the proof of \cref{thm:general_small_basis}).

We will use the following result on the feasibility of polynomial matrix
division with remainder, and on the complexity of performing such divisions
using a Newton iteration-based approach.
\begin{lemma}
  \label{lem:matrix_quorem}
  Let \(\mat{A}\) and \(\mat{B}\) be matrices in \(\xmatspace{m}{m}\) with
  \(\mat{B}\) nonsingular. Then there exists a matrix \(\mat{R} \in
  \xmatspace{m}{m}\) such that \(\mat{A}-\mat{R}\) is a left multiple of
  \(\mat{B}\) and \(\deg(\mat{R}) < \deg(\mat{B})\). There is an algorithm
  \hypertarget{href_lem_matrix_quorem}{\textproc{PM-Rem}} which, on input
  \(\mat{A}\) and \(\mat{B}\), returns such a matrix \(\mat{R}\) using
  \(\softO(m^{\expmm} (\deg(\mat{A}) + \deg(\mat{B})))\) operations in
  \(\field\).
\end{lemma}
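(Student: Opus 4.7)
The plan is to prove existence of the remainder by a Laurent-series argument at infinity, and then to realize the algorithm via reversals and Newton iteration. The main technical hurdle will be handling the possible non-invertibility of the leading coefficient matrix of $\mat{B}$.

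For existence, since $\mat{B}$ is nonsingular, $\mat{B}^{-1}$ lies in $\ffield(x)^{m \times m}$, and I would decompose each entry of $\mat{A}\mat{B}^{-1}$ as polynomial-plus-strictly-proper, writing $\mat{A}\mat{B}^{-1} = \mat{Q} + \mat{F}$ with $\mat{Q} \in \xmatspace{m}{m}$ and every entry of $\mat{F}$ of the form $u/v$ with $\deg u < \deg v$. Setting $\mat{R} := \mat{A} - \mat{Q}\mat{B} = \mat{F}\mat{B}$, this is a polynomial matrix, and since each entry of $\mat{F}$ is $\bigO(x^{-1})$ at infinity while $\mat{B}$ has entry-degree at most $\deg(\mat{B})$, the product $\mat{F}\mat{B}$ has entries of degree at most $\deg(\mat{B})-1$, so $\deg(\mat{R}) < \deg(\mat{B})$.

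For the algorithm, set $d = \deg(\mat{B})$ and $e = \deg(\mat{A})$. If $e < d$, simply return $\mat{A}$. Otherwise, let $k = e - d + 1$ and form the reversals $\tilde{\mat{A}}(x) = x^e \mat{A}(1/x)$ and $\tilde{\mat{B}}(x) = x^d \mat{B}(1/x)$. A direct substitution shows that $\mat{A} = \mat{Q}\mat{B} + \mat{R}$ with $\deg(\mat{Q}) \le e - d$ and $\deg(\mat{R}) < d$ is equivalent to $\tilde{\mat{A}} \equiv \tilde{\mat{Q}}\tilde{\mat{B}} \pmod{x^k}$, where $\tilde{\mat{Q}} = x^{e-d}\mat{Q}(1/x)$. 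Thus the algorithm would compute $\tilde{\mat{B}}^{-1} \bmod x^k$ by Newton iteration, multiply by $\tilde{\mat{A}}$ truncated modulo $x^k$ to obtain $\tilde{\mat{Q}}$, reverse to recover $\mat{Q}$, and return $\mat{R} = \mat{A} - \mat{Q}\mat{B}$. Using fast polynomial matrix multiplication, the $\bigO(\log k)$ truncated multiplications of the Newton iteration together with the final product of degrees $e-d$ and $d$ amount to $\softO(m^\expmm(e+d)) = \softO(m^\expmm(\deg(\mat{A})+\deg(\mat{B})))$ operations in $\field$.

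The main obstacle is that Newton iteration requires $\tilde{\mat{B}}(0)$, which is the leading coefficient matrix of $\mat{B}$, to be invertible over $\field$; this does not hold in general. To sidestep this I would precompute a unimodular $\mat{U}$ bringing $\mat{B}$ to row-reduced form $\mat{B}' := \mat{U}\mat{B}$ with row degrees $d_1,\dots,d_m \le d$, at a cost of $\softO(m^\expmm d)$ using the algorithms of \cite{giorgi_complexity_2003}. Then the standard row-degree version of the reversal-Newton procedure, using $\operatorname{diag}(x^{d_1},\dots,x^{d_m})$ to define row-wise reversals whose leading matrix is invertible by construction, yields $\mat{A} = \mat{Q}'\mat{B}' + \mat{R}$ with $\deg(\mat{R}) < d$; taking $\mat{Q} := \mat{Q}'\mat{U}$ gives the desired identity $\mat{A} = \mat{Q}\mat{B} + \mat{R}$ within the announced complexity.
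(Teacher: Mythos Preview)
Your proof is correct and follows the same overall strategy as the paper: normalize $\mat{B}$ via a left-unimodular transformation so that a reversal-plus-Newton division applies, then compute the remainder within the claimed cost. The paper computes the Popov form of $\mat{B}$, exploiting that it is automatically \emph{column}-reduced so that the column-wise reversal has invertible constant term, and delegates the division step to \cite[Algorithm~1]{NeigerVu2017}; you instead row-reduce and use a row-wise reversal $\operatorname{diag}(x^{d_i})\,\mat{B}'(1/x)$, and you give a self-contained existence argument via the strictly-proper decomposition where the paper simply cites Kailath. One minor point: your final clause about setting $\mat{Q} := \mat{Q}'\mat{U}$ is unnecessary for returning $\mat{R}$ (since $\mat{A}-\mat{R}=\mat{Q}'\mat{B}'$ is already a left multiple of $\mat{B}$), and is best dropped because the algorithms of \cite{giorgi_complexity_2003} return the reduced form $\mat{B}'$ but not the transformation $\mat{U}$ within the stated cost.
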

\begin{proof}
  The existence of \(\mat{R}\) such that \(\mat{A}-\mat{R}\) is a left multiple
  of \(\mat{B}\) is proved in \cite[Theorem~6.3-15,
  page~389]{kailath_linear_1980}. This reference also ensures that
  \(\mat{B}^{-1}\mat{R}\) is a so-called \emph{strictly proper} matrix
  fraction, which implies \(\deg(\mat{R}) < \deg(\mat{B})\) as showed for
  example in \cite[Lemma~6.3-10, page~383]{kailath_linear_1980}. To find
  \(\mat{R}\), one may start with computing a Popov form \(\mat{P} \in
  \xmatspace{m}{m}\) of \(\mat{B}\), which costs \(\softO(m^\expmm
  \deg(\mat{B}))\) \cite[Theorem~21]{sarkar_normalization_2011}. In particular,
  \(\mat{B}\) and \(\mat{P}\) are left-unimodularly equivalent, so that left
  multiples of \(\mat{B}\) are the same as left multiples of \(\mat{P}\).  Thus
  \(\mat{R}\) can be found as a remainder in the division of \(\mat{A}\) by
  \(\mat{P}\), since \(\deg(\mat{P}) \le \deg(\mat{B})\). Since \(\mat{P}\) is
  column reduced, to find this remainder we can apply
  \cite[Algorithm~1]{NeigerVu2017}: this boils down to one truncated expansion
  at order \(\bigO(\deg(\mat{A}))\) of the inverse of an \(m\times m\) matrix
  (whose constant term is invertible), and two multiplications of two \(m
  \times m\) matrices of degree in \(\bigO(\deg(\mat{A}) + \deg(\mat{P}))\).
  Hence the total cost is \(\softO(m^{\expmm} (\deg(\mat{A}) +
  \deg(\mat{P})))\), which concludes the proof since \(\deg(\mat{P}) \le
  \deg(\mat{B})\).
\end{proof}

\section{The interpolant module \texorpdfstring{\(\M\)}{M\_\{s,l,r\}} and polynomial matrix representations of it}
\label{sec:interpolant_module}

We now study the module $\M$ more in depth. First we generalize
\cref{thm:M-description} to get more flexibility on the choice of generators
for \(\M\), and we make such an explicit choice
(\cref{sec:interpolant_module:generators}). Then we introduce several maps and
the corresponding \(\xring\)-matrices
(\cref{sec:interpolant_module:maps_matrices}). This allows us to describe a
basis of $\M$ as an \(\xring\)-module, and to represent this module as the
$\xring$-row space of an explicit polynomial matrix $\matM$ in
\(\xmatspace{\mdim}{\mdim}\), for \(\mdim = \mu(\ell+1)\)
(\cref{sec:interpolant_module:first_Fqx_basis}). Finally, in
\cref{sec:interpolant_module:small_Fqx_basis}, we deduce another basis matrix
\(\pfM\) which is less explicit than \(\matM\), but computationally easier to
construct and manipulate. Throughout, it is assumed that the received vector is
$\r=(r_1,\dots,r_n) \in \field^n$ and that $R \in \Ya(G)$ is a function
satisfying $R(P_j) = r_j$ for $1 \le j \le n$ as well as the bound from
\cref{lem:interp_size}, $\delta_G(R) \leq n+2g-1-\deg(G)$.

\subsection{A more flexible description of \texorpdfstring{\(\M\)}{M\_\{s,l,r\}}}
\label{sec:interpolant_module:generators}

In \cref{thm:M-description}, two ways to describe the module $\M$ were given.
We now indicate a more general shape for alternative descriptions for $\M$.

\begin{theorem}
  \label{thm:bases_of_M}
  For $s \le t \le \ell$, let $f_{t}(z) = \sum_{i=0}^{t-s} f_{ti} z^{t-s-i}
  \in \ffield[z]$, where $f_{ti} \in \Ya(iG)$ and \(\deg_z(f_t) = t-s\), hence
  in particular $f_{t0} \in \field\setminus\{0\}$. Then
  \[
    \M =
    \bigoplus_{t=0}^{s-1}(z-R)^t \Ya(G_t) \oplus \bigoplus_{t=s}^{\ell} f_{t}(z) (z-R)^s \Ya(G_t).
  \]
\end{theorem}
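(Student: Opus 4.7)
The plan is to reduce the statement to the second form of \cref{thm:M-description} (\cref{eqn:mbasis_binomial_shifted}) by exhibiting an invertible upper-triangular change of basis between the families $\{z^{t-s}(z-R)^s\,\Ya(G_t)\}_{t=s}^\ell$ and $\{f_{t}(z)(z-R)^s\,\Ya(G_t)\}_{t=s}^\ell$. The first block $\bigoplus_{t=0}^{s-1}(z-R)^t\Ya(G_t)$ is identical in both descriptions, so the work is entirely on the tail $t\ge s$.

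First I would expand, for any $a_t \in \Ya(G_t)$,
\[
f_{t}(z)(z-R)^s a_t \;=\; \sum_{i=0}^{t-s} z^{(t-i)-s}(z-R)^s\,(f_{ti}\,a_t).
\]
Since $f_{ti}\in\Ya(iG)$ and $a_t\in\Ya(G_t)=\Ya(-tG)$, the product satisfies $f_{ti}a_t\in\Ya(-(t-i)G)=\Ya(G_{t-i})$, where I use that $t-i\ge s$ because $i\le t-s$. Reindexing $t'=t-i$ and summing over $t$ yields
\[
\sum_{t=s}^{\ell} f_{t}(z)(z-R)^s a_t \;=\; \sum_{t'=s}^{\ell} z^{t'-s}(z-R)^s\,b_{t'},\qquad b_{t'} \;=\; \sum_{t=t'}^{\ell} f_{t,t-t'}\,a_t.
\]
This defines an $\xring$-linear map $\Phi:\bigoplus_{t=s}^\ell \Ya(G_t)\to \bigoplus_{t'=s}^\ell \Ya(G_{t'})$ whose ``matrix'' is upper triangular in the indices $(t',t)$, with diagonal entries equal to the scalars $f_{t',0}\in \field\setminus\{0\}$.

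The crux is then to check that $\Phi$ is a bijection preserving the correct $\Ya$-spaces. Since each diagonal entry $f_{t',0}$ is a unit of $\Ya$, back-substitution from $t'=\ell$ down to $t'=s$ gives
\[
a_{t'} \;=\; f_{t',0}^{-1}\!\left(b_{t'} - \sum_{t>t'} f_{t,t-t'}\,a_t\right),
\]
and the same degree bookkeeping as above shows $f_{t,t-t'}a_t\in \Ya(G_{t'})$ for every $t>t'$, so $a_{t'}\in\Ya(G_{t'})$ as required. The main (minor) obstacle is precisely this careful tracking of the divisors $iG$ and $G_t$ through the multiplications; once this is in place the invertibility is immediate from the triangular shape and the units on the diagonal.

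Finally, the bijection $\Phi$ transports the decomposition of \cref{eqn:mbasis_binomial_shifted} to the claimed one: both sums contain the same elements of $\ffield[z]$, and the direct-sum property is inherited, because $\sum_t f_{t}(z)(z-R)^s a_t=0$ forces $\Phi((a_t)_t)=0$ and hence $a_t=0$ for all $t$. Combined with the unchanged first block, this proves the theorem.
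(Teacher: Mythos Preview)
Your argument is correct, and it takes a slightly different route from the paper's. You reduce to the second description \cref{eqn:mbasis_binomial_shifted} (the $z^{t-s}(z-R)^s$ form) and exploit the fact that the $f_t$ are already written as polynomials in $z$: the change of basis from $z^{t-s}(z-R)^s$ to $f_t(z)(z-R)^s$ is then literally the coefficient matrix $(f_{t,t-t'})$, visibly upper triangular with units $f_{t,0}\in\field\setminus\{0\}$ on the diagonal, so back-substitution and the divisor bookkeeping $f_{t,t-t'}\,\Ya(G_t)\subseteq\Ya(G_{t'})$ do the job. The paper instead starts from the first description \cref{eqn:mbasis_binomial} (the $(z-R)^t$ form): it first expands each $f_t(z)$ in powers of $(z-R)$ via the binomial formula to get one inclusion, and then proves the reverse inclusion by an induction on $t$. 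Your route is shorter and avoids both the binomial expansion and the induction; the paper's route, on the other hand, uses only \cref{eqn:mbasis_binomial} and thereby yields \cref{eqn:mbasis_binomial_shifted} as a special case (as the paper explicitly notes just after the proof). Since \cref{eqn:mbasis_binomial_shifted} is already quoted from \cite{BRS2022} in \cref{thm:M-description}, relying on it as you do is entirely legitimate.
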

\begin{proof}
  Using \cref{eqn:mbasis_binomial} in \cref{thm:M-description}, and since by
  definition \(G_t = -tG\) for \(t\ge s\), it is sufficient to show that
  \[
    \M = \bigoplus_{t=0}^{s-1} (z-R)^t \Ya(G_t) \oplus \bigoplus_{t=s}^{\ell} (z-R)^t \Ya(-tG)
  \]
  is equal to
  \[
    \mathcal{M} :=\bigoplus_{t=0}^{s-1}(z-R)^t \Ya(G_t) \oplus
    \bigoplus_{t=s}^{\ell}f_{t}(z)(z-R)^s \Ya(-tG).
  \]

  We first prove the inclusion ``\(\M \supseteq \mathcal{M}\)''. For this we
  show, for \(t = s,\ldots,\ell\), the inclusion $f_{t}(z)(z-R)^s \Ya(-tG)
  \subseteq \bigoplus_{k=s}^{t}(z-R)^k \Ya(-kG)$. This follows from \(f_t(z)
  \in \bigoplus_{k=s}^{t}(z-R)^{k-s} \Ya((t-k)G)\), which itself comes from
  using the binomial formula on \(f_t(z) = \sum_{j=0}^{t-s} (z-R+R)^{t-s-j}
  f_{tj}\). Indeed, this yields $f_{t}(z)=\sum_{k=s}^{t} (z-R)^{k-s}
  \tilde{f}_{tk}$ where, for \(s \le k\le t\), \(\tilde{f}_{tk} = \sum_{j =
  0}^{t-k} \binom{t-s-j}{k-s} R^{t-k-j} f_{tj}\) is in \(\Ya((t-k)G)\).

  Now we prove the inclusion ``\(\M \subseteq \mathcal{M}\)''. For this we
  show, for each $t=s,\dots,\ell$, the inclusion $(z-R)^t \Ya(-tG) \subseteq
  \bigoplus_{k=s}^{t} f_k(z)(z-R)^s \Ya(-kG)$. Similarly to the above, it is
  enough to prove $(z-R)^{t-s} \in \bigoplus_{k=s}^{t} f_k(z) \Ya((t-k)G)$. We
  proceed by induction on \(t \in \{s,\ldots,\ell\}\), showing that
  $(z-R)^{t-s} = \sum_{k=s}^{t} f_k(z) \alpha_{tk}$ for certain \(\alpha_{tk}
  \in \Ya((t-k)G)\). The property is obvious for \(t=s\), since \(f_s(z) =
  f_{s0} \in \field\setminus\{0\}\). Let \(t \in \{s+1,\ldots,\ell\}\) and
  assume the property holds from \(s\) to \(t-1\). Then
  \begin{align*}
    (z-R)^{t-s} & = \frac{1}{\tilde{f}_{tt}} \left(f_t(z) - \sum_{j=s}^{t-1} (z-R)^{j-s} \tilde{f}_{tj}\right) \\
                & = \frac{1}{\tilde{f}_{tt}} f_t(z) + \sum_{k=s}^{t-1} f_k(z) \left(\frac{-1}{\tilde{f}_{tt}} \sum_{j=k}^{t-1} \alpha_{jk} \tilde{f}_{tj}\right)
  \end{align*}
  which proves the property for \(t\), since \(\tilde{f}_{tt} \in
  \field\setminus\{0\}\) and \(\alpha_{jk} \tilde{f}_{tj} \in \Ya((t-k)G)\).
\end{proof}

As a first observation, the second description in \cref{thm:M-description} is
now an easy consequence of \cref{thm:bases_of_M} (which we proved using only
the first description in \cref{thm:M-description}). To obtain a faster decoder,
we will start from the following description of the interpolant module.

\begin{corollary}
  \label{cor:new_description}
  For $s \le t \le \ell$, let
  \[
    g_t(z)=\sum_{i=0}^{t-s} \binom{i+s-1}{i} R^i
    z^{t-s-i} \in \zring,
  \]
  then
  \[
    \M=\bigoplus_{t=0}^{s-1}(z-R)^t \Ya(G_t) \oplus
    \bigoplus_{t=s}^{\ell}g_t(z)(z-R)^s \Ya(G_t).
  \]
\end{corollary}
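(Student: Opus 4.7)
The plan is to recognize this corollary as a direct instantiation of \cref{thm:bases_of_M}, obtained by taking $f_t(z) = g_t(z)$ for each $s \le t \le \ell$. There is no induction, no calculation with $(z-R)^t$, and no re-derivation of the flexible basis: once the hypotheses of the theorem are verified for the particular polynomials $g_t$, the decomposition of $\M$ follows verbatim.

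The verification has two parts, both straightforward. First I would write $g_t(z) = \sum_{i=0}^{t-s} g_{ti} z^{t-s-i}$ with $g_{ti} = \binom{i+s-1}{i} R^i$, and check that $g_{ti} \in \Ya(iG)$. Since $R \in \Ya(G)$ by assumption, iterating the multiplicativity property $\delta_{A+B}(ab) = \delta_A(a) + \delta_B(b)$ recalled in \cref{sec:preliminaries:ff_agcodes} yields $R^i \in \Ya(iG)$; multiplication by the scalar $\binom{i+s-1}{i} \in \field$ (really, by the image of this integer in $\field$) keeps the function in $\Ya(iG)$. Second, I would check that $\deg_z(g_t) = t-s$: the $i=0$ term is $g_{t0} = \binom{s-1}{0} R^0 = 1$, which lies in $\field\setminus\{0\}$ as required.

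With both hypotheses verified, \cref{thm:bases_of_M} applies and gives exactly the announced decomposition. So there is no real obstacle here — the proof is essentially a one-line appeal to the previous theorem plus a trivial check on the coefficients of $g_t$. If anything, the point worth emphasizing is \emph{why} this particular choice of coefficients is singled out among the many possible $f_t$ allowed by \cref{thm:bases_of_M}: the identity $z^t = (z-R)^s \sum_{i \ge 0} \binom{i+s-1}{i} R^i z^{t-s-i}$, valid as a formal expansion in powers of $z^{-1}$, shows that $g_t(z)$ is the ``polynomial part'' of $z^t/(z-R)^s$ in $z$. This structural property is what will let the later algorithmic sections exploit the new description, but it plays no role in the proof of the corollary itself.
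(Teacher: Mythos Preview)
Your proposal is correct and follows essentially the same approach as the paper: verify that $g_t(z)$ satisfies the two hypotheses of \cref{thm:bases_of_M} (coefficients in $\Ya(iG)$ because $R\in\Ya(G)$, and $\deg_z(g_t)=t-s$ because the leading coefficient is $1$), then invoke the theorem. The extra remark on the power-series origin of $g_t$ is not part of the proof in the paper either; it appears separately in \cref{lem:g_t_zeroes}.
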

\begin{proof}
  We only need to check that \(f_t(z) = g_t(z)\) is a valid choice in
  \cref{thm:bases_of_M}. The first condition \(\binom{i+s-1}{i} R^i \in
  \Ya(iG)\) follows from \(R \in \Ya(G)\). The second condition \(\deg_z(g_t(z))
  = t-s\) is obvious.
\end{proof}

In this description, the polynomial \(g_t(z)(z-R)^s\) has at most \(s+1\)
nonzero coefficients, as we will see in the next lemma (\cref{lem:g_t_zeroes})
which gives an explicit formula for these coefficients.  In fact, looking back
at the description in \cref{eqn:mbasis_binomial_shifted}, the polynomial
\(z^{t-s}(z-R)^s\) also has at most \(s+1\) nonzero coefficients. Yet, the
advantage of \(g_t(z)(z-R)^s\) over \(z^{t-s}(z-R)^s\) is the range of
monomials that may appear with nonzero coefficients. Apart from the common
leading term \(z^t\), for the latter these monomials are
\(z^{t-s},z^{t-s+1},\ldots,z^{t-1}\), whereas for the former they are
\(1,z,\ldots,z^{s-1}\) independently of \(t\). As we will see in
\cref{sec:interpolant_module:first_Fqx_basis,sec:interpolant_module:small_Fqx_basis},
this particular location of nonzero coefficients is instrumental in our
approach for building an \(\xring\)-basis of \(\M\) which has small average
column degree.

\begin{lemma}
  \label{lem:g_t_zeroes}
  For $s \le t \le \ell$, let 
  \(g_t(z)\) as in \cref{cor:new_description}.
  Then $z^{t}-g_t(z)(z-R)^s$ has degree at most $s-1$. Moreover,
  for $0 \le j<s$, the coefficient of $z^j$ in $g_t(z)(z-R)^s$ equals
  \(\gamma_{t,j} R^{t-j}\), where
  \begin{equation}
    \label{eqn:dfn_gamma}
    \gamma_{t,j} = \sum_{i=s-j}^{s} \binom{s}{i} \binom{t-j-i+s-1}{s-1} (-1)^i.
  \end{equation}
\end{lemma}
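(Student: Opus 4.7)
The plan is to directly compute the coefficient of each power of $z$ in $g_t(z)(z-R)^s$ by expanding both factors via the binomial theorem. Starting from
\[
  g_t(z) = \sum_{i=0}^{t-s} \binom{i+s-1}{i} R^i z^{t-s-i}
  \quad\text{and}\quad
  (z-R)^s = \sum_{k=0}^{s} (-1)^k \binom{s}{k} R^k z^{s-k},
\]
the product is a double sum, and after collecting the terms with $i+k=t-j$, the coefficient of $z^j$ is $R^{t-j}$ times a signed sum of binomials. Taking $i=k=0$ immediately shows that the leading monomial is $z^t$ with coefficient $1$.

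For the degree bound, I would fix $j$ with $s \le j \le t-1$, put $m = t-j \in \{1,\ldots,t-s\}$, and show the corresponding coefficient of $z^j$ vanishes. Under the standard convention that $\binom{n}{s-1} = 0$ for integer $n$ with $0 \le n < s-1$ (which silently absorbs the terms in which $i = m-k$ would be forced negative), this coefficient equals
\[
  R^m \sum_{k=0}^{s} (-1)^k \binom{s}{k} \binom{m-k+s-1}{s-1}.
\]
A change of variable $j = s-k$ identifies the inner sum with the $s$-th finite difference, evaluated at $x=m-1$, of the polynomial $p(x)=\binom{x}{s-1}$. Since $\deg p = s-1 < s$, the sum vanishes, which yields $\deg(z^t - g_t(z)(z-R)^s) \le s-1$.

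For the explicit formula in the range $0 \le j < s$, the same expansion gives the coefficient of $z^j$ as $R^{t-j}$ times $\sum_k (-1)^k \binom{s}{k} \binom{t-j-k+s-1}{s-1}$, where the admissible $k$ satisfy $0 \le i = t-j-k \le t-s$, i.e.\ $s-j \le k \le \min(s, t-j)$. To match the statement's formula $\gamma_{t,j} = \sum_{i=s-j}^{s}$, which removes the upper constraint, I would verify that any extra index $t-j < k \le s$ satisfies $0 \le t-j-k+s-1 \le s-2$, and therefore contributes $\binom{t-j-k+s-1}{s-1} = 0$. The main obstacle in this plan is purely bookkeeping: keeping the index ranges and the extended binomial convention consistent so that both halves of the lemma reduce, on the one hand, to the vanishing of a high-order finite difference of an under-degree polynomial, and on the other, to a clean index identification.
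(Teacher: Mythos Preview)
Your argument is correct. For the degree bound, the paper takes a different route: it uses the formal Laurent series identity
\[
  z^t = (z-R)^s \, z^{t-s} \left(1 - \tfrac{R}{z}\right)^{-s}
      = (z-R)^s \sum_{i\ge 0} \binom{i+s-1}{i} R^i z^{t-s-i},
\]
so that $g_t(z)$ is literally the truncation of this series to nonnegative powers of $z$. Then $z^t - g_t(z)(z-R)^s$ equals $(z-R)^s$ times the tail $\sum_{i>t-s}(\cdots)z^{t-s-i}$, and a glance at the exponents gives the bound $s-1$ without any combinatorial cancellation. Your approach instead stays entirely within polynomials and reduces the vanishing of the intermediate coefficients to the classical fact that the $s$-th finite difference of the degree-$(s-1)$ polynomial $\binom{x}{s-1}$ is zero. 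The paper's argument is shorter and explains \emph{why} $g_t$ has this particular shape, whereas yours is more self-contained (no Laurent series) and makes the mechanism of cancellation completely explicit. For the second claim, both proofs do the same thing: expand the product and read off the coefficient. One cosmetic point: you reuse the letter $j$ for the substitution $j=s-k$ while $j$ is already the exponent of $z$; renaming it would avoid a clash.
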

\begin{proof}
  Using a classical power series expansion formula in $z^{-1}$, we obtain that
  \begin{align*}
    z^t & = (z-R)^s z^{t-s} \left( 1 - \frac{R}{z} \right)^{-s} \\
        & = (z-R)^s \sum_{i\ge 0} \binom{i+s-1}{i} R^i z^{t-s-i}.
  \end{align*}
  This shows that
  \[
  z^{t}-g_t(z)(z-R)^s= z^t - (z-R)^s \sum_{i\ge t-s+1} \textstyle\binom{i+s-1}{i} R^i z^{t-s-i}.
  \]
  Hence the polynomial $z^{t}-g_t(z)(z-R)^s$ has degree at most $s-1$.
  To prove the second part of the lemma, one can simply expand the product \(g_t(z) (z-R)^s\),
  yielding that for $0 \le j<s$  the coefficient of $z^j$ in \(g_t(z) (z-R)^s\) equals
  $\gamma_{t,j} R^{t-j}$, just as indicated.
\end{proof}

\subsection{Inclusion and multiplication maps, and their matrices}
\label{sec:interpolant_module:maps_matrices}

In this subsection we study two types of $\xring$-module homomorphisms: the
first type are inclusion maps of submodules in a module, while the second type
are maps of multiplication by some $R \in \Ya(G)$.

We will also consider matrices over \(\xring\) which represent these maps, as
this will help us describe and compute bases of \(\M\) as an \(\xring\)-module.
Note that, if the ranks as \(\xring\)-modules of the domain and codomain of the
considered map are the same, then this map can, after choosing bases, be
represented by a square $\xring$-matrix.

We start with the maps derived from the inclusions $\Ya(G_t) \subseteq
\Ya(-tG)$ for $0 \le t < s$; these inclusions follow from $G_t=(t-s)D - tG \le
-tG$.

\begin{definition}
  \label{def:inclusion_map}
  For $0 \le t < s$, the map $\imath_t: \Ya(G_t) \to \Ya(-tG)$ is defined as the
  natural inclusion map of $\Ya(G_t)$ in $\Ya(-tG)$. We denote by $\mat{D}_t
  \in \xmatspace{\mu}{\mu}$ the matrix of $\imath_t$ with respect to the ordered
  $\xring$-bases $(\y[G_t]_0,\dots,\y[G_t]_{\mu-1})$ for $\Ya(G_t)$ and
  $(\y[-tG]_0,\dots,\y[-tG]_{\mu-1})$ for $\Ya(-tG)$.
\end{definition}

\begin{remark}
  In this paper, such matrices of maps are considered in a row-wise manner. For
  example, in this definition, the \(i\)th \emph{row} of \(\mat{D}_t\) yields
  the expression of \(\imath_t(\y[G_t]_i)\) as an \(\xring\)-linear combination
  of the mentioned basis of \(\Ya(-tG)\).
\end{remark}

For deriving complexity estimates, we will use the
following bound on the degree of any single entry of \(\mat{D}_t\).

\begin{lemma}
  \label{lem:deg_D_t}
  The matrix \(\mat{D}_t\) is nonsingular and \(\deg(\mat{D}_t)\) is in
  $\bigO(s(n+g)/\mu)$.
\end{lemma}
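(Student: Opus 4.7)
The plan is to deduce nonsingularity from the injectivity of $\imath_t$, and then bound entry degrees using \cref{lem:deg} combined with the size control on Ap\'ery basis elements from \cref{lem:Fx_basis}.

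For nonsingularity, suppose for contradiction that there is a nonzero row vector $(a_0,\ldots,a_{\mu-1}) \in \xring^\mu$ in the left kernel of $\mat{D}_t$. Interpreted through the chosen bases, this means $\sum_i a_i \imath_t(\y[G_t]_i) = 0$ in $\Ya(-tG)$. Since $\imath_t$ is the inclusion, it is in particular injective, so $\sum_i a_i \y[G_t]_i = 0$ in $\Ya(G_t)$, contradicting the fact that $(\y[G_t]_0,\ldots,\y[G_t]_{\mu-1})$ is an $\xring$-basis of $\Ya(G_t)$.

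For the degree bound, the $i$th row of $\mat{D}_t$ contains the $\xring$-coefficients of $\y[G_t]_i$ in the basis $(\y[-tG]_0,\ldots,\y[-tG]_{\mu-1})$ of $\Ya(-tG)$. By \cref{lem:deg} applied with $A = -tG$ and $a = \y[G_t]_i$, each such coefficient has degree at most $\frac{1}{\mu}\bigl(\delta_{-tG}(\y[G_t]_i) + \deg(-tG)\bigr) = \frac{1}{\mu}\bigl(\delta_{-tG}(\y[G_t]_i) - t\deg(G)\bigr)$. Next I would compare $\delta_{-tG}$ with $\delta_{G_t}$ on $\y[G_t]_i$: since $0 \le t < s$, the divisor $-tG - G_t = (s-t)D$ is effective, so $G_t \le -tG$ and hence $\L(m\Pinf + G_t) \subseteq \L(m\Pinf - tG)$ for every $m$, giving $\delta_{-tG}(\y[G_t]_i) \le \delta_{G_t}(\y[G_t]_i)$. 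Then \cref{lem:Fx_basis} yields $\delta_{G_t}(\y[G_t]_i) \le 2g - 1 - \deg(G_t) + \mu = 2g - 1 + (s-t)n + t\deg(G) + \mu$, using $\deg(G_t) = (t-s)n - t\deg(G)$.

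Substituting back, the $-t\deg(G)$ terms cancel and every entry of $\mat{D}_t$ has degree at most
\[
  \frac{2g - 1 + (s-t)n + \mu}{\mu} \le \frac{2g + sn + \mu}{\mu},
\]
since $0 \le t < s$. Using $\deg(G) \in \bigO(n+g)$ and $s \ge 1$, this is in $\bigO(s(n+g)/\mu)$, as desired. No step presents a real obstacle here: the nonsingularity is essentially a formal consequence of $\imath_t$ being an inclusion between free modules of the same rank, and the only ingredient requiring a bit of care is correctly orienting the inclusion $G_t \le -tG$ for $t<s$ so that $\delta_{-tG}$ can be controlled by $\delta_{G_t}$.
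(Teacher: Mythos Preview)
Your proof is correct and follows essentially the same approach as the paper: nonsingularity from injectivity of $\imath_t$, then entry degree bounds via \cref{lem:deg} applied in $\Ya(-tG)$, the comparison $\delta_{-tG}(\y[G_t]_i) \le \delta_{G_t}(\y[G_t]_i)$ coming from $G_t \le -tG$, and finally \cref{lem:Fx_basis} to bound $\delta_{G_t}(\y[G_t]_i)$, yielding the same expression $\frac{1}{\mu}(2g-1+\mu+(s-t)n)$. The only superfluous remark is the mention of $\deg(G)\in\bigO(n+g)$ at the end, since $\deg(G)$ has already cancelled out of the final bound.
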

\begin{proof}
  Since \(\imath_t\) is an injection, \(\mat{D}_t\) is nonsingular. Let
  \([p_{ij}]_{0 \le i,j < \mu}\) be the entries of the matrix \(\mat{D}_t\), so
  that $\y[G_t]_i=\sum_{j=0}^{\mu-1} p_{ij} \y[-tG]_j$ for \(0\le i<\mu\) and
  \(\deg(\mat{D}_t) = \max_{ij} \deg(p_{ij})\). Using
  \cref{lem:Fx_basis,lem:deg} we see that
  \begin{align*}
    \deg(p_{ij}) & \le \frac{1}{\mu}(\delta_{-tG}(\y[G_t]_i) -t\deg(G)) \\
                 & \le \frac{1}{\mu}(\delta_{G_t}(\y[G_t]_i) -t\deg(G))\\
                 & \le \frac{1}{\mu}(2g-1+\mu-\deg(G_t) -t\deg(G)) \\
                 & =\frac{1}{\mu}(2g-1+\mu+(s-t)n) \;\;\in \bigO(s(n+g)/\mu).
                 \qedhere
  \end{align*}
\end{proof}

We will also use the following, similarly defined inclusion maps and matrices,
where we have defined \(H_t = -sD-tG\) for \(0 \le t < s\).

\begin{definition}
  \label{def:inclusion_map_superset}
  For $0 \le t < s$, the map $\jmath_t: \Ya(H_t) \to \Ya(-tG)$ is defined
  as the natural inclusion map of $\Ya(H_t)$ in $\Ya(-tG)$. We denote by
  $\mat{E}_t \in \xmatspace{\mu}{\mu}$ the matrix of $\jmath_t$ with
  respect to the $\xring$-bases $(\y[H_t]_0,\dots,\y[H_t]_{\mu-1})$
  for $\Ya(H_t)$ and $(\y[-tG]_0,\dots,\y[-tG]_{\mu-1})$ for $\Ya(-tG)$.
\end{definition}

This matrix \(\mat{E}_t\) satisfies properties similar to those of
\(\mat{D}_t\).

\begin{lemma}
  \label{lem:deg_E_t}
  The matrix \(\mat{E}_t\) is nonsingular and \(\deg(\mat{E}_t)\) is in
  $\bigO(s(n+g)/\mu)$.
\end{lemma}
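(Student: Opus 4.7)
The plan is to mirror almost verbatim the proof of Lemma~\ref{lem:deg_D_t}, since \(\jmath_t\) plays the same structural role as \(\imath_t\), only with the submodule \(\Ya(H_t)\) in place of \(\Ya(G_t)\). Nonsingularity of \(\mat{E}_t\) follows immediately: \(\jmath_t\) is an inclusion and hence injective, and it maps between free \(\xring\)-modules of the same rank \(\mu\), so its matrix representation is nonsingular.

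For the degree bound, I would write the entries as \(\mat{E}_t = [q_{ij}]_{0 \le i,j < \mu}\), so that \(\y[H_t]_i = \sum_{j=0}^{\mu-1} q_{ij} \y[-tG]_j\). Applying \cref{lem:deg} to \(\y[H_t]_i \in \Ya(-tG)\) with the divisor \(-tG\) yields
\[
  \deg(q_{ij}) \le \frac{1}{\mu}\bigl(\delta_{-tG}(\y[H_t]_i) - t\deg(G)\bigr).
\]
The key small observation (needed to invoke \cref{lem:Fx_basis} for \(\y[H_t]_i\)) is that \(\delta_{-tG}(\y[H_t]_i) = \delta_{H_t}(\y[H_t]_i)\). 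This holds because \(\delta_A(a) = -\val(a) - \val(A)\) depends on \(A\) only through its coefficient at \(\Pinf\), and since \(\Pinf \notin \supp(D)\) we have \(\val(H_t) = \val(-sD - tG) = \val(-tG)\).

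Then \cref{lem:Fx_basis} applied to the divisor \(H_t\) gives
\[
  \delta_{H_t}(\y[H_t]_i) \le 2g-1+\mu-\deg(H_t) = 2g-1+\mu+sn+t\deg(G),
\]
where I used \(\deg(H_t) = -sn - t\deg(G)\). Substituting back, the \(t\deg(G)\) terms cancel and we obtain
\[
  \deg(q_{ij}) \le \frac{1}{\mu}(2g-1+\mu+sn) \;\;\in \bigO(s(n+g)/\mu),
\]
which is the claimed bound. There is no real obstacle here: the only non-routine moment is the identification \(\delta_{-tG}(\y[H_t]_i) = \delta_{H_t}(\y[H_t]_i)\), which holds thanks to the support condition \(\Pinf \notin \supp(D)\) enforced in \cref{sec:preliminaries:ff_agcodes}.
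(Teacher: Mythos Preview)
Your proof is correct and follows exactly the route the paper intends: the paper's own proof simply reads ``can be directly adapted from that of \cref{lem:deg_D_t}'', and what you wrote is precisely that adaptation with \(H_t\) in place of \(G_t\). Your explicit justification that \(\delta_{-tG}(\y[H_t]_i) = \delta_{H_t}(\y[H_t]_i)\) via \(\Pinf \notin \supp(D)\) is the right way to make the second inequality in the proof of \cref{lem:deg_D_t} go through here (and in fact shows it is an equality, though \(\le\) would already suffice).
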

\begin{proof}
  The proof can be directly adapted from that of \cref{lem:deg_D_t}.
\end{proof}

We now turn our attention to the maps of multiplication by some $R \in \Ya(G)$.

\begin{definition}
  \label{def:R_t}
  For $1 \le t \le \ell$, we let the multiplication map $R_t: \Ya(-tG) \to
  \Ya(-(t-1)G)$ be defined by $R_t : f \mapsto Rf$.  We denote by $\mat{R}_t
  \in \xmatspace{\mu}{\mu}$ the matrix of $R_t$ with respect to the ordered
  $\xring$-bases $(\y[-tG]_0,\dots,\y[-tG]_{\mu-1})$ for $\Ya(-tG)$ and
  $(\y[-(t-1)G]_0,\dots,\y[-(t-1)G]_{\mu-1})$ for $\Ya(-(t-1)G)$.
\end{definition}

Although this definition is valid for any \(R \in \Ya(G)\), recall that here we
consider specifically \(R\) such that \(\delta_G(R) \leq n+2g-1-\deg(G)\). This
allows us to bound the degree of any single entry of \(\mat{R}_t\), as follows.

\begin{lemma}
  \label{lem:deg_R_t}
  The matrix degree of $\mat{R}_t$ is in $\bigO((n+g)/\mu)$.
\end{lemma}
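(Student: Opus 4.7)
The plan is to bound each entry of $\mat{R}_t$ individually, mirroring the strategy already used successfully in the proof of \cref{lem:deg_D_t}. Writing $[p_{ij}]_{0 \le i,j < \mu}$ for the entries of $\mat{R}_t$, the $i$th row encodes the decomposition $R \cdot \y[-tG]_i = \sum_{j=0}^{\mu-1} p_{ij}\, \y[-(t-1)G]_j$, so $\deg(\mat{R}_t) = \max_{i,j} \deg(p_{ij})$ and the task reduces to bounding each $\deg(p_{ij})$.

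First I would invoke the additivity $\delta_{A+B}(ab) = \delta_A(a) + \delta_B(b)$ stated in \cref{sec:preliminaries:ff_agcodes}, applied to $a = R \in \Ya(G)$ and $b = \y[-tG]_i \in \Ya(-tG)$, to get $R\y[-tG]_i \in \Ya(-(t-1)G)$ with
\[
\delta_{-(t-1)G}(R\y[-tG]_i) = \delta_G(R) + \delta_{-tG}(\y[-tG]_i).
\]
The hypothesis on $R$ yields $\delta_G(R) \le n+2g-1-\deg(G)$, and \cref{lem:Fx_basis} applied to the divisor $-tG$ gives $\delta_{-tG}(\y[-tG]_i) \le 2g-1+\mu+t\deg(G)$. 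Summing, I obtain
\[
\delta_{-(t-1)G}(R\y[-tG]_i) \le n+4g-2+\mu+(t-1)\deg(G).
\]

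Then I apply \cref{lem:deg} to $a = R\y[-tG]_i$ viewed in $\Ya(A)$ with $A = -(t-1)G$, written in the basis $(\y[-(t-1)G]_0,\ldots,\y[-(t-1)G]_{\mu-1})$. This gives
\[
\deg(p_{ij}) \le \tfrac{1}{\mu}\bigl(\delta_{-(t-1)G}(R\y[-tG]_i) + \deg(-(t-1)G)\bigr) = \tfrac{1}{\mu}\bigl(\delta_{-(t-1)G}(R\y[-tG]_i) - (t-1)\deg(G)\bigr),
\]
so the $(t-1)\deg(G)$ contribution cancels and, after substituting the bound above, $\deg(p_{ij}) \le \frac{1}{\mu}(n+4g-2+\mu) \in \bigO((n+g)/\mu)$, as desired.

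I do not foresee any real obstacle: the computation is a direct imitation of the proof of \cref{lem:deg_D_t}, the only new ingredient being the additional term $\delta_G(R)$, which is controlled precisely by the standing assumption $\delta_G(R)\le n+2g-1-\deg(G)$ (this is in fact the reason the bound carries an $n$, absent from \cref{lem:deg_D_t}). The only mild bookkeeping point to double-check is that the $t$-dependent quantity $(t-1)\deg(G)$ coming from $\delta_{-tG}(\y[-tG]_i)$ and that coming from $\deg(-(t-1)G)$ indeed cancel, ensuring the bound is uniform in $t$.
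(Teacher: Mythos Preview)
Your proposal is correct and follows essentially the same route as the paper's proof: both bound $\delta_{-(t-1)G}(R\y[-tG]_i)$ via the additivity of $\delta$ and \cref{lem:Fx_basis}, then feed this into \cref{lem:deg} so that the $(t-1)\deg(G)$ terms cancel, arriving at the same final estimate $1+\tfrac{1}{\mu}(n+4g-2)$. The only cosmetic difference is that the paper substitutes the bound $\delta_G(R)\le n+2g-1-\deg(G)$ at the very end rather than before applying \cref{lem:deg}.
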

\begin{proof}
  Using \cref{lem:Fx_basis}, we see that for any $0 \le i,t < \mu$,
  \begin{align*}
    \delta_{-(t-1)G}(R\y[-tG]_i) & = \delta_G(R) + \delta_{-tG}(\y[-tG]_i) \\
                                 & \le \delta_G(R) + 2g-1 + t\deg(G) + \mu.
  \end{align*}
  Let \([p_{ij}]_{0 \le i,j < \mu}\) be the entries of \(\mat{R}_t\), so that
  $R\y[-tG]_i = \sum_{j=0}^{\mu-1} p_{ij} \y[-(t-1)G]_j$ for \(0\le i<\mu\) and
  \(\deg(\mat{R}_t) = \max_{ij} \deg(p_{ij})\). Then from \cref{lem:deg} and
  the fact that $\delta_G(R) \leq n+2g-1-\deg(G)$, we obtain
  \begin{align*}
    & \deg(p_{ij}) \le \frac{1}{\mu}\left(\delta_{-(t-1)G}(R\y[-tG]_i) + \deg(-(t-1)G)\right) \\
                 & \le \frac{1}{\mu}\left(\delta_G(R) + 2g-1 + t\deg(G) + \mu - (t-1) \deg(G)\right) \\
                 & = 1 + \frac{1}{\mu}\left(\delta_G(R) + 2g-1 + \deg(G)\right) \\
                 & \le 1 + \frac{1}{\mu}\left(n + 4g-2\right).
    \qedhere
  \end{align*}
\end{proof}

In what follows we will also use this notation:
\begin{definition}
  \label{def:R_tj}
  For \(0\le t \le \ell\) and \(0 \le j \le \ell\), the matrix
  \(\mat{R}^{(t,j)} \in \xmatspace{\mu}{\mu}\) is defined as
  \[
    \mat{R}^{(t,j)} =
    \left\{
      \begin{array}{l}
        \mat{R}_t \mat{R}_{t-1} \cdots \mat{R}_{j+1} \text{ for } 0 \le j < t; \\
        \text{the } \mu \times \mu \text{ identity matrix } \idmat\text{ for } j=t; \\
        \text{the } \mu \times \mu \text{ zero matrix } \zmat \text{ for } t < j \le \ell.
      \end{array}
    \right.
  \]
\end{definition}

The definition of the \(\mat{R}_t\)'s implies that, for \(0 \le j \le t\),
\(\mat{R}^{(t,j)}\) is the matrix of the following map of multiplication by
\(R^{t-j}\):
\[
  R_{j+1} \circ R_{j+2} \circ \cdots \circ R_t: \Ya(-tG) \to \Ya(-jG), f \mapsto R^{t-j} f,
\]
in the ordered bases \((\y[-tG]_0,\dots,\y[-tG]_{\mu-1})\) for \(\Ya(-tG)\) and
\((\y[-jG]_0,\dots,\y[-jG]_{\mu-1})\) for \(\Ya(-jG)\). Similarly if \(j \le t
< s\), then \(\mat{D}_t \mat{R}^{(t,j)}\) is the matrix of the map
\[
  R_{j+1} \circ R_{j+2} \circ \cdots \circ R_t \circ \imath_t: \Ya(G_t) \to \Ya(-jG), f \mapsto R^{t-j} f,
\]
in the ordered bases \((\y[G_t]_0,\dots,\y[G_t]_{\mu-1})\) for \(\Ya(G_t)\) and
\((\y[-jG]_0,\dots,\y[-jG]_{\mu-1})\) for \(\Ya(-jG)\).

\subsection{A first polynomial matrix basis of \texorpdfstring{\(\M\)}{M\_\{s,l,r\}}}
\label{sec:interpolant_module:first_Fqx_basis}

From \cref{thm:bases_of_M}, one may deduce a basis of \(\M\) as an
\(\xring\)-module.

\begin{lemma}
  \label{lem:Fqx_bases_of_M}
  \(\M\) is an \(\xring\)-module of rank \(\mdim := \mu(\ell+1)\), and admits
  the following basis:
  \begin{align*}
    & \left\{ (z-R)^t \y[G_t]_i \mid 0 \le t < s, 0 \le i < \mu \right\} \\
    & \qquad \bigcup\;\; \left\{ f_{t}(z) (z-R)^s \y[G_t]_i \mid s \le t \le \ell, 0 \le i < \mu \right\},
  \end{align*}
  for any family of polynomials \(\{f_{t}(z) \in \zring \mid s \le t \le
  \ell\}\) as in \cref{thm:bases_of_M}.
\end{lemma}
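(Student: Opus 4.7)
The plan is to deduce this lemma directly from \cref{thm:bases_of_M} together with the fact, recalled just after \cref{lem:Fx_basis}, that for any divisor $A$ the family $(\y[A]_0,\dots,\y[A]_{\mu-1})$ is an $\xring$-basis of $\Ya(A)$. By \cref{thm:bases_of_M}, $\M$ decomposes as the direct sum
\[
  \M = \bigoplus_{t=0}^{s-1}(z-R)^t \Ya(G_t) \oplus \bigoplus_{t=s}^{\ell} f_{t}(z)(z-R)^s \Ya(G_t),
\]
so it suffices to produce, for each $t$, an explicit $\xring$-basis of the corresponding summand and then concatenate them.

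For this I would observe that, for $0\le t<s$, the map $\phi_t : \Ya(G_t) \to \ffield[z]$ defined by $\phi_t(f) = (z-R)^t f$ is $\xring$-linear (since scalars from $\xring$ commute with multiplication by a polynomial in $z$) and injective (since $\ffield[z]$ is an integral domain and $(z-R)^t \ne 0$). Hence $\phi_t$ is an isomorphism of $\xring$-modules onto its image $(z-R)^t\Ya(G_t)$, and it carries the $\xring$-basis $(\y[G_t]_0,\dots,\y[G_t]_{\mu-1})$ of $\Ya(G_t)$ to the $\xring$-basis $((z-R)^t\y[G_t]_0,\dots,(z-R)^t\y[G_t]_{\mu-1})$ of that image. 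The same argument applied to $\psi_t : f \mapsto f_t(z)(z-R)^s f$ for $s\le t\le\ell$ works verbatim, since $f_t(z)(z-R)^s$ is a nonzero element of $\ffield[z]$ (its leading term in $z$ is nonzero by the assumption $\deg_z(f_t) = t-s$), and it yields the $\xring$-basis $(f_t(z)(z-R)^s\y[G_t]_0,\dots,f_t(z)(z-R)^s\y[G_t]_{\mu-1})$ of the summand $f_t(z)(z-R)^s\Ya(G_t)$.

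Concatenating these bases across all $t\in\{0,\dots,\ell\}$ gives a set of $(\ell+1)\mu = \mdim$ elements which, because the sum in \cref{thm:bases_of_M} is direct, forms an $\xring$-basis of $\M$; this simultaneously proves the rank claim. There is no real obstacle here: the only small thing to check is the $\xring$-linearity and injectivity of the multiplication maps $\phi_t$ and $\psi_t$, both of which are immediate from the fact that multiplication in $\ffield[z]$ by a nonzero polynomial is an $\xring$-linear injection.
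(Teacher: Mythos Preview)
Your proof is correct and follows essentially the same approach as the paper: both rely on the direct-sum decomposition of \cref{thm:bases_of_M} together with the fact that $(\y[G_t]_0,\dots,\y[G_t]_{\mu-1})$ is an $\xring$-basis of $\Ya(G_t)$. Your packaging via the observation that multiplication by a nonzero element of $\ffield[z]$ is an $\xring$-linear injection (hence carries bases to bases), combined with concatenating bases across a direct sum, is slightly slicker than the paper's explicit verification of spanning and $\xring$-linear independence, but the underlying ingredients are identical.
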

\begin{proof}
  Let \(\mathcal{B}\) be the claimed basis of \(\M\). Since
  $\genbyuniv{\y[G_t]_0,\dots,\y[G_t]_{\mu-1}} = \Ya(G_t)$ for \(0\le t \le
  \ell\), from \cref{thm:bases_of_M} it follows both that \(\mathcal{B}
  \subseteq \M\) and that any element of \(\M\) is an \(\xring\)-linear
  combination of polynomials in \(\mathcal{B}\); whence
  \(\genbyuniv{\mathcal{B}} = \M\). To prove that \(\mathcal{B}\) is a basis,
  it remains to show that its elements are \(\xring\)-linearly independent. Let
  \((\alpha_{t,i})_{0\le i < \mu, 0 \le t < k} \in \xring^\mdim\) be a tuple
  such that
  \begin{align*}
    & \sum_{0 \le t < s, 0 \le i < \mu} \alpha_{t,i} (z-R)^t \y[G_t]_i \\
    & +
    \sum_{s \le t \le \ell, 0 \le i < \mu} \alpha_{t,i} f_{t}(z) (z-R)^s \y[G_t]_i
    =
    0.
  \end{align*}
  Since \(f_t(z)(z-R)^s\) has degree \(t\), the polynomials
  \[
    \{(z-R)^t \mid 0 \le t < s\} \cup \{f_t(z) (z-R)^s \mid s \le t \le \ell\}
  \]
  form a basis of
  the \(\ffield\)-vector space \(\zring_{\deg_z \le \ell}\). Thus, from the
  above identity we deduce that \(\sum_{0\le i< \mu} \alpha_{t,i} \y[G_t]_i =
  0\) for \(0 \le t \le \ell\). By definition of the \(\y[G_t]_i\)'s, this
  implies \(\alpha_{t,i} = 0\) for all \(t\) and \(i\).  Hence the rows of \(\mathcal{B}\)
  form a basis of \(\M\), and the rank of \(\M\) is the cardinality \(\mdim\) of
  \(\mathcal{B}\).
\end{proof}

To represent such a basis of \(\M\) as a matrix \(\matM\) over \(\xring\), we
see \(\M\) as a submodule of the free \(\xring\)-module \(\bigoplus_{0 \le t
\le \ell} z^t \Ya(-tG)\) of rank \(\mdim\), with basis \(z^j \y[-tG]_k, 0\le t
\le \ell, 0 \le k < \mu\). The following \(\xring\)-module isomorphism will be
useful for describing \(\matM\):
\[
    \varphi_\ell: \xmatspace{1}{\mdim} \;\; \to \;\; \bigoplus_{t=0}^\ell z^t \Ya(-tG),
\]
which maps
\([p_{0,0} \cdots p_{0,\mu-1} \;|\; \cdots \;|\; p_{\ell,0} \cdots p_{\ell,\mu-1}]\)
to
\(\sum_{t=0}^\ell \sum_{k=0}^{\mu-1} p_{t,k} \y[-tG]_k z^t\).
Then, the rows of \(\matM\) are the preimages by \(\varphi_\ell\) of the
elements of the basis of \(\M\) described in \cref{lem:Fqx_bases_of_M}.
Choosing specifically for \(f_t(z)\) the polynomial \(g_t(z)\) described in
\cref{sec:interpolant_module:generators}, and using the maps and matrices
defined in \cref{sec:interpolant_module:maps_matrices}, we obtain the following
explicit description of \(\matM\).

\begin{definition}
  \label{def:Fqx_basis_matrix_of_M}
  Let \(\mdim = \mu(\ell+1)\).
  The matrix $\matM \in \xmatspace{\mdim}{\mdim}$ is defined by
  blocks as $\matM = [\begin{smallmatrix} \mat{D} & \zmat \\ \mat{R}
  & \idmat \end{smallmatrix}]$ where
  \begin{itemize}
    \item \(\idmat\) is the \((\mdim-\mu s) \times (\mdim - \mu s)\) identity
      matrix;
    \item \(\zmat\) is the \((\mu s) \times (\mdim-\mu s)\) zero matrix;
    \item \(\mat{D} \in \xmatspace{(\mu s)}{(\mu s)}\) is defined by blocks as
      \begin{align*}
        \mat{D}
        & =
        \begin{bmatrix}
          \gamma_{t,j} \mat{D}_t \mat{R}^{(t,j)}
        \end{bmatrix}_{0\le t < s, 0 \le j < s}
        \\
        & =
        \begin{bmatrix}
          \mat{D}_0 \\
          -\mat{D}_1 \mat{R}_1 & \mat{D}_1 \\
          \mat{D}_2 \mat{R}_2 \mat{R}_1 & -2 \mat{D}_2 \mat{R}_2 & \mat{D}_2 \\
          \vdots &  &  & \ddots
        \end{bmatrix},
      \end{align*}
      where \(\gamma_{t,j} = (-1)^{t-j} \binom{t}{j}\) for \(0 \le t <s\), \(0
      \le j < s\);
    \item \(\mat{R} \in \xmatspace{(\mdim - \mu s)}{(\mu s)}\) is defined by
      blocks as
      \begin{align*}
        \mat{R}
        & =
        \begin{bmatrix}
          \gamma_{t,j} \mat{R}^{(t,j)}
        \end{bmatrix}_{s \le t \le \ell, 0 \le j < s}
        \\
        & =
        \begin{bmatrix}
          \gamma_{s,0} \mat{R}^{(s,0)} & \cdots & \gamma_{s,s-1} \mat{R}^{(s,s-1)} \\
          \vdots & & \vdots \\
          \gamma_{\ell,0} \mat{R}^{(\ell,0)} & \cdots & \gamma_{\ell,s-1} \mat{R}^{(\ell,s-1)}
        \end{bmatrix},
      \end{align*}
      where \(\gamma_{t,j}\) for \(s \le t \le \ell, 0 \le j < s\) is defined
      in \cref{eqn:dfn_gamma}.
  \end{itemize}
\end{definition}

\begin{theorem}
  \label{thm:first_Fqx_basis_matrix_of_M}
  The matrix \(\matM\) from \cref{def:Fqx_basis_matrix_of_M} is a basis of
  \(\M\), seen as an \(\xring\)-submodule of \(\bigoplus_{0 \le t \le \ell} z^t
  \Ya(-tG)\). More precisely, indexing the rows of \(\matM\) from \(0\) to
  \(\mdim-1\), for \(0\le t \le \ell\) and \(0 \le i < \mu\) its row at index
  \(t\mu + i\) is \(\varphi_\ell^{-1}((z-R)^t \y[G_t]_i)\) if \(t < s\), and
  \(\varphi_\ell^{-1}(g_t(z) (z-R)^s \y[G_t]_i)\) if \(s \le t \le \ell\).
\end{theorem}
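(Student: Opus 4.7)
The plan is to verify, row by row, that $\varphi_\ell$ applied to the rows of $\matM$ produces exactly the basis of $\M$ given by \cref{lem:Fqx_bases_of_M} with the choice $f_t(z) = g_t(z)$ of \cref{cor:new_description}. Once this is established, it follows immediately that the rows of $\matM$ are $\xring$-linearly independent (since $\varphi_\ell$ is an isomorphism) and generate $\M$ as an $\xring$-module. Thus the main task is purely a bookkeeping verification of the block structure in \cref{def:Fqx_basis_matrix_of_M}; there is no deep obstacle, but care is needed to match the various blocks against the correct coefficients.

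For the first block of rows, fix $0 \le t < s$ and $0 \le i < \mu$. Expanding with the binomial theorem gives
\[
  (z-R)^t \y[G_t]_i = \sum_{j=0}^{t} (-1)^{t-j}\binom{t}{j} R^{t-j} \y[G_t]_i \, z^j ,
\]
and each coefficient $R^{t-j}\y[G_t]_i$ lies in $\Ya(-jG)$ since $\y[G_t]_i \in \Ya(G_t)$ and $G_t \le -tG$. As noted after \cref{def:R_tj}, the matrix $\mat{D}_t \mat{R}^{(t,j)}$ represents the map $f \mapsto R^{t-j} f$ from $\Ya(G_t)$ to $\Ya(-jG)$ in the chosen bases; hence the $i$-th row of $\mat{D}_t \mat{R}^{(t,j)}$ is precisely the coordinate vector of $R^{t-j}\y[G_t]_i$ in $(\y[-jG]_0,\ldots,\y[-jG]_{\mu-1})$. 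Scaling by $\gamma_{t,j} = (-1)^{t-j}\binom{t}{j}$ matches exactly the $(t,j)$-block of $\mat{D}$. For $j > t$ we have $\mat{R}^{(t,j)} = \zmat$ by \cref{def:R_tj}, matching the lower-triangular structure; and for $s \le j \le \ell$ the coefficient of $z^j$ vanishes (since $\deg_z((z-R)^t) = t < s$), matching the top-right zero block.

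For the second block of rows, fix $s \le t \le \ell$ and $0 \le i < \mu$. By \cref{lem:g_t_zeroes},
\[
  g_t(z)(z-R)^s = z^t + \sum_{j=0}^{s-1} \gamma_{t,j} R^{t-j} z^j ,
\]
so that, using $G_t = -tG$ (and therefore $\y[G_t]_i = \y[-tG]_i$),
\[
  g_t(z)(z-R)^s \y[G_t]_i = \y[-tG]_i \, z^t + \sum_{j=0}^{s-1} \gamma_{t,j} R^{t-j} \y[-tG]_i \, z^j .
\]
The leading term $\y[-tG]_i z^t$ is encoded by the entry $1$ at position $(t\mu+i,\,t\mu+i)$ in the identity block $\idmat$, with all other entries in the last $\mdim - \mu s$ columns vanishing because $g_t(z)(z-R)^s$ has no $z^j$-term for $s \le j \le \ell$ with $j \ne t$. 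For $0 \le j < s$, the same interpretation as above (now applied to $\mat{R}^{(t,j)}$, which represents multiplication by $R^{t-j}$ from $\Ya(-tG)$ to $\Ya(-jG)$) shows that the $i$-th row of $\gamma_{t,j} \mat{R}^{(t,j)}$ is the coordinate vector of $\gamma_{t,j} R^{t-j} \y[-tG]_i$ in the basis of $\Ya(-jG)$. This matches the $(t,j)$-block of $\mat{R}$. Hence $\varphi_\ell$ sends the row at index $t\mu+i$ of $\matM$ to $g_t(z)(z-R)^s \y[G_t]_i$, and combined with the first case the conclusion follows from \cref{lem:Fqx_bases_of_M}. \qed
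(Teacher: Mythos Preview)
Your proof is correct and follows essentially the same approach as the paper's own proof: verify row by row that $\varphi_\ell$ maps the rows of $\matM$ to the basis elements of \cref{lem:Fqx_bases_of_M} (with $f_t = g_t$), using the binomial expansion for $t<s$ and \cref{lem:g_t_zeroes} for $t\ge s$, together with the interpretation of $\mat{D}_t\mat{R}^{(t,j)}$ and $\mat{R}^{(t,j)}$ as matrices of multiplication-by-$R^{t-j}$ maps. Your write-up is simply more detailed than the paper's terse version.
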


\begin{proof}
  This follows directly from the construction of \(\matM\) and from the
  definition of \(\mat{D}_t\) and \(\mat{R}^{(t,j)}\) in
  \cref{sec:interpolant_module:maps_matrices}. Indeed, for \(0 \le t \le \ell\)
  and \(0 \le i < \mu\), the image by \(\varphi_\ell\) of the row \(t\mu+i\) of
  \(\matM\) as built in \cref{def:Fqx_basis_matrix_of_M} is
  \[
  \sum_{j=0}^{t} (-1)^{t-j} \binom{t}{j} R^{t-j}\y[G_t]_i z^j = (z-R)^t\y[G_t]_i \text{ if } t < s,
  \]
  and
  \begin{align*}
       & z^t\y[-tG]_i + \sum_{j=0}^{s-1} \gamma_{t,j} R^{t-j}\y[-tG]_i z^j \\
                & = g_t(z) (z-R)^s \y[-tG]_i \text{ if } s \le t \le \ell,
  \end{align*}
  where the last identity comes from \cref{lem:g_t_zeroes}.
\end{proof}

Observe in particular the effect of our choice of polynomials \(g_t(z)\) from
\cref{sec:interpolant_module:generators}: only the left \(\mu s\) columns
$[\begin{smallmatrix} \mat{D} \\ \mat{R} \end{smallmatrix}]$ of \(\matM\) are
nontrivial, while the remaining columns $[\begin{smallmatrix} \zmat \\ \idmat
\end{smallmatrix}]$ are standard basis vectors. A parallel can be drawn with the remark
on the monomials appearing in \(g_t(z) (z-R)^s\), in
\cref{sec:interpolant_module:generators}. In contrast, the two descriptions
from \cite{BRS2022} recalled in \cref{thm:M-description} lead to matrices which
are block-triangular as well, but with a lower triangular part which is either
dense (if using \cref{eqn:mbasis_binomial}) or is a band matrix (if using
\cref{eqn:mbasis_binomial_shifted}). Although in the latter case the number of
nonzero blocks is the same as in \(\matM\), the fact that these nonzero blocks
are confined to the leftmost columns brings us closer to knowing an
\(\xring\)-basis of \(\M\) with small average column degree, as we are going to
see in the next subsection.

\subsection{A small-degree polynomial matrix basis of \texorpdfstring{\(\M\)}{M\_\{s,l,r\}}}
\label{sec:interpolant_module:small_Fqx_basis}

Keeping the same notation as in the previous subsection, we deduce from the matrix
\(\matM\) a whole collection of suitable matrices for representing bases of
\(\M\) as an \(\xring\)-module, which all share the property that their
rightmost \(\mdim-\mu s\) columns are the standard basis vectors
$[\begin{smallmatrix} \zmat \\ \idmat \end{smallmatrix}]$. It is within this
collection that we will find an \(\xring\)-basis of \(\M\) with small average
column degree, which means that its shifted Popov form can be computed
efficiently.

\begin{theorem}
  \label{thm:general_small_basis}
  For any matrix \(\mat{\bar{R}} \in \xmatspace{(\mdim-\mu s)}{(\mu s)}\) such
  that \(\mat{R} - \mat{\bar{R}}\) is a left multiple of \(\mat{D}\), the
  matrix \(\smatM := [\begin{smallmatrix} \mat{D} & \zmat \\ \mat{\bar{R}} &
  \idmat \end{smallmatrix}]\in \xmatspace{\mdim}{\mdim}\) is a basis of \(\M\),
  seen as an \(\xring\)-submodule of \(\bigoplus_{0 \le t \le \ell} z^t
  \Ya(-tG)\). In particular, if \(\mat{\bar{R}}\) has degree in
  \(\bigO(s(n+g)/\mu)\), then the sum of column degrees of \(\smatM\) is in
  \(\bigO(s^2(n+g))\), and the \(\shift\)-Popov form of \(\smatM\) can be
  computed in \(\softO(s^2 \ell^{\expmm-1} \mu^{\expmm-1} (n+g) + \ell^\expmm \mu^\expmm)\) operations in
  \(\field\) for any shift \(\shift \in \frac{1}{\mu} \ZZ^{(\ell+1)\mu}\).
\end{theorem}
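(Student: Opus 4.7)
The plan is to establish the three claims of the theorem in order, exploiting the explicit block structure of $\smatM$ throughout.

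First, I would show that $\smatM$ is a basis of $\M$ by exhibiting a unimodular transformation from $\matM$. By hypothesis there exists $\mat{V} \in \xmatspace{(\mdim-\mu s)}{(\mu s)}$ such that $\mat{R} - \mat{\bar{R}} = \mat{V}\mat{D}$. Setting
$$
\mat{U} = \begin{bmatrix} \idmat & \zmat \\ -\mat{V} & \idmat \end{bmatrix} \in \xmatspace{\mdim}{\mdim},
$$
a direct block multiplication yields $\mat{U}\matM = \smatM$. Since $\det(\mat{U}) = 1$, $\mat{U}$ is unimodular and $\smatM$ has the same $\xring$-row space as $\matM$. By \cref{thm:first_Fqx_basis_matrix_of_M}, this row space is exactly $\M$.

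Second, I would bound the sum of column degrees of $\smatM$ using the block structure from \cref{def:Fqx_basis_matrix_of_M}. The rightmost $\mdim - \mu s$ columns are standard basis vectors contributing zero. For the leftmost $\mu s$ columns, \cref{lem:deg_D_t,lem:deg_R_t} give $\deg(\mat{D}_t) \in \bigO(s(n+g)/\mu)$ and $\deg(\mat{R}_t) \in \bigO((n+g)/\mu)$, so each block $\mat{D}_t \mat{R}^{(t,j)}$ with $0 \le j \le t < s$ has degree at most $\deg(\mat{D}_t) + (t-j)\max_k \deg(\mat{R}_k) \in \bigO(s(n+g)/\mu)$. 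Combined with the hypothesis $\deg(\mat{\bar{R}}) \in \bigO(s(n+g)/\mu)$, each of the $\mu s$ leftmost columns has degree in $\bigO(s(n+g)/\mu)$, and summing gives the claimed bound $\bigO(s^2(n+g))$.

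Third, I would invoke the algorithm from \cite{NeigerVu2017} for computing shifted Popov forms, whose cost is sensitive to the sum of column degrees of the input. For an $m \times m$ nonsingular polynomial matrix with sum of column degrees $D$, it computes the shifted Popov form in $\softO(m^{\expmm-1} D + m^\expmm)$ operations in $\field$. To apply it here, I would first observe that $\smatM$ is nonsingular: its block triangular structure has diagonal blocks equal to the $\mat{D}_t$'s (nonsingular by \cref{lem:deg_D_t}) and identity blocks. Substituting $m = \mdim = \mu(\ell+1)$ and $D \in \bigO(s^2(n+g))$ then yields the target cost $\softO(s^2 \ell^{\expmm-1} \mu^{\expmm-1}(n+g) + \ell^\expmm \mu^\expmm)$.

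The main technical obstacle is ensuring that the Neiger-Vu algorithm handles the rational shift $\shift \in \frac{1}{\mu}\ZZ^{\mdim}$ allowed in the statement, and that its average-degree sensitivity is preserved. This can be handled by scaling by $\mu$ to reduce to an integer shift, which does not affect the Popov form structure beyond a trivial rescaling and contributes only logarithmic factors to the complexity; the magnitudes of the shift entries should also be bounded in terms of the column degrees so that the cost estimate remains valid.
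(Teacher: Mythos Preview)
Your first two steps match the paper's proof: the unimodular equivalence via $\mat{U}=[\begin{smallmatrix}\idmat & \zmat\\ -\mat{V} & \idmat\end{smallmatrix}]$ and the column-degree bound obtained from \cref{lem:deg_D_t,lem:deg_R_t} are exactly what the paper does. Invoking the average-degree-sensitive shifted Popov algorithm of \cite{NeigerVu2017} is also the paper's route, and your cost expression $\softO(m^{\expmm-1}D+m^\expmm)$ is equivalent to the paper's $\softO(m^\expmm\lceil D/m\rceil)$.

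The one genuine gap is your reduction from a fractional shift to an integer one. Scaling $\shift$ by $\mu$ does \emph{not} preserve the shifted Popov form: the pivot index of a row $\vec{v}$ is the column $j$ maximizing $\deg v_j+d_j$, and since the $\deg v_j$ are integers while the $d_j$ need not be, multiplying only the $d_j$'s by $\mu$ can turn a strict inequality into a tie (or the reverse) and change the pivot. If instead you substitute $x\mapsto x^\mu$ so that degrees and shifts scale together, pivots are preserved but every degree grows by a factor $\mu$, inflating the cost by a factor $\mu$ beyond the target bound. The correct reduction, used in the paper, exploits that $\deg v_j\in\ZZ$: writing $d_j=\lfloor d_j\rfloor+\{d_j\}$ with $\{d_j\}\in\{0,\tfrac1\mu,\ldots,\tfrac{\mu-1}{\mu}\}$, the comparison of $\deg v_j+d_j$ across columns is governed first by the integer $\deg v_j+\lfloor d_j\rfloor$, with ties broken by the fixed column-dependent fractional parts $\{d_j\}$. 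Permuting the columns so that the $\{d_j\}$ are weakly increasing and then replacing $\shift$ by $\lfloor\shift\rfloor$ therefore yields an integer-shift problem whose Popov form, un-permuted, is the desired one, with no degree growth; see \cite[Section~III.A]{nielsen_sub-quadratic_2015} and \cite[Theorem~V.9]{BRS2022}. (Your concern about shift magnitudes is unnecessary: the Neiger--Vu cost bound holds for arbitrary integer shifts.)
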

\begin{proof}
  The matrices \(\matM\) and \(\smatM\) are left-unimodularly equivalent, since
  \[
    \begin{bmatrix}
      \idmat & \zmat \\
      -\mat{Q} & \idmat
    \end{bmatrix}
    \begin{bmatrix}
      \mat{D} & \zmat \\
      \mat{R} & \idmat
    \end{bmatrix}
    =
    \begin{bmatrix}
      \mat{D} & \zmat \\
      \mat{\bar{R}} & \idmat
    \end{bmatrix}
  \]
  where \(\mat{Q} \in \xmatspace{(\mdim - \mu s)}{(\mu s)}\) is the quotient
  matrix such that \(\mat{R} = \mat{Q} \mat{D} + \mat{\bar{R}}\). Hence
  the rows of \(\smatM\) form a basis of \(\M\).

  The degree bounds in \cref{sec:interpolant_module:maps_matrices} and the
  construction of \(\mat{R}^{(t,j)}\) show that, for \(0\le t < s\) and \(0 \le
  j < s\), both \(\deg(\mat{D}_t)\) and \(\deg(\mat{R}^{(t,j)})\) are in
  \(\bigO(s(n+g)/\mu)\). Therefore the degree of \(\mat{D} = [\gamma_{t,j}
  \mat{D}_t \mat{R}^{(t,j)}]_{0\le t < s, 0\le j < s}\) is also in
  \(\bigO(s(n+g)/\mu)\). Then, under the assumption on \(\deg(\mat{\bar{R}})\)
  stated in the theorem, the first \(\mu s\) columns of \(\smatM\) have degree in
  \(\bigO(s(n+g)/\mu)\), whereas its last \(\mdim-\mu s\) columns have degree
  \(0\). The claimed bound on the column degrees of \(\smatM\) follows.

  Computing the shifted Popov form of \(\smatM\), for any integer shift
  \(\shift \in \ZZ^{(\ell+1)\mu}\), can then be performed in
  \(\softO(\ell^\expmm\mu^\expmm \lceil \frac{s^2(n+g)}{\ell\mu} \rceil)\)
  operations in \(\field\) \cite[Theorem~1.3]{NeigerVu2017}. It has been showed
  that the case of a shift \(\shift\) with fractional entries in
  \(\frac{1}{\mu} \ZZ^{(\ell+1)\mu}\) directly reduces to the case of an
  integer shift by both permuting the matrix columns appropriately and rounding
  down the entries of \(\shift\) to integers; see
  \cite[Section\,III.A]{nielsen_sub-quadratic_2015} \cite[Theorem~V.9]{BRS2022}
  for the present case, and \cite[Section~1.3.4]{vincent_neiger_bases_2016} for
  transforming more generally any module monomial ordering on
  \(\xring^{(\ell+1) s}\) into a corresponding shift in \(\ZZ^{(\ell+1)\mu}\).
  The inequality \(\lceil \frac{s^2(n+g)}{\ell\mu} \rceil <
  \frac{s^2(n+g)}{\ell\mu} + 1\) leads to the cost bound stated in the theorem.
\end{proof}

Finally, in \cref{thm:specific_small_basis} we will make the above result more
effective by describing an explicit construction of such a small-degree matrix
\(\mat{\bar{R}}\), using remainders in the matrix division of \(\mat{R}\)
modulo the matrices \(\mat{E}_0,\ldots,\mat{E}_{s-1}\) defined in
\cref{sec:interpolant_module:maps_matrices}. Here are some explanations why
such a degree reduction is needed, and not straightforward.

\begin{remark}
  Observe that the matrix \(\mat{R}\), as defined in
  \cref{sec:interpolant_module:first_Fqx_basis}, may have degrees too large for
  our purpose; that is, simply taking \(\mat{\bar{R}}=\mat{R}\) in the above
  theorem is not interesting as \(\deg(\mat{R})\) is most likely not in
  \(\bigO(s(n+g)/\mu)\). In fact, by definition \(\mat{R}\) has \(\ell+1-s\)
  blocks of \(\mu\) rows each with \(s \mu\) columns, and the degree of the
  \(i\)th block of rows is in \(\bigO((s+i) \frac{n+g}{\mu})\). In total, the
  dense representation of \(\mat{R}\) therefore uses
  \[
    \bigO\left(\sum_{1 \le i \le \ell+1-s} \mu (\mu s) (s+i) \frac{n+g}{\mu}\right)
    \subseteq \bigO\left( \ell^2 s \mu (n+g) \right)
  \]
  coefficients from \(\field\), and this asymptotic bound can be reached.
  Indeed, it is reached already in the case of Reed-Solomon codes, where
  \(\mat{R}^{(t,j)}\) is a polynomial in \(\xring\), which is the power
  \(R^{t-j}\) of some polynomial \(R\in\xring\) whose degree is \(n-1\)
  generically. Thus, simply the size of the storage of \(\mat{R}\) can already
  be in conflict with our target complexity. This also implies that we must aim
  to compute a smaller degree \(\mat{\bar{R}}\) without computing all of
  \(\mat{R}\).
\end{remark}

\begin{theorem}
  \label{thm:specific_small_basis}
  For each \(s \le t \le \ell\) and \(0 \le j < s\), there exists a matrix
  \(\mat{\bar{R}}^{(t,j)}\) such that \(\mat{R}^{(t,j)} -
  \mat{\bar{R}}^{(t,j)}\) is a left multiple of \(\mat{E}_j\) and
  \(\deg(\mat{\bar{R}}^{(t,j)}) < \deg(\mat{E}_j)\). Then, the matrix
  \(\mat{\bar{R}} = [\gamma_{t,j} \mat{\bar{R}}^{(t,j)}]_{s \le t \le \ell, 0
  \le j < s} \in \xmatspace{(\mdim-\mu s)}{(\mu s)}\) has degree in
  \(\bigO(s(n+g)/\mu)\) and is such that \(\mat{R} - \mat{\bar{R}}\) is a left
  multiple of \(\mat{D}\).
\end{theorem}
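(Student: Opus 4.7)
The plan is to split the statement into three tasks: (i) produce the individual matrices \(\mat{\bar{R}}^{(t,j)}\) and bound the degree of the assembled \(\mat{\bar{R}}\); (ii) reduce the ``left multiple of \(\mat{D}\)'' property to a block-diagonal variant via a factorization of \(\mat{D}\); and (iii) establish a module-theoretic identity relating \(\mat{E}_{t'}\mat{R}^{(t',j)}\) to \(\mat{D}_j\). Task (i) is immediate from \cref{lem:matrix_quorem}: since \cref{lem:deg_E_t} ensures that \(\mat{E}_j\) is nonsingular with \(\deg(\mat{E}_j) \in \bigO(s(n+g)/\mu)\), applying \cref{lem:matrix_quorem} with \(\mat{A} = \mat{R}^{(t,j)}\) and \(\mat{B} = \mat{E}_j\) yields \(\mat{\bar{R}}^{(t,j)}\) with the two required properties, and the bound \(\deg(\mat{\bar{R}}^{(t,j)}) < \deg(\mat{E}_j) \in \bigO(s(n+g)/\mu)\) transfers blockwise to \(\mat{\bar{R}}\).

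For task (ii), I would use the factorization \(\mat{D} = \mat{\Delta}\mat{\tilde{D}}\), where \(\mat{\Delta}\) is block-diagonal with diagonal blocks \(\mat{D}_0, \ldots, \mat{D}_{s-1}\) and \(\mat{\tilde{D}}\) is block lower triangular with block \((t,j) = \gamma_{t,j}\mat{R}^{(t,j)}\). Since \(\mat{R}^{(t,t)} = \idmat\) and \(\gamma_{t,t} = 1\), the diagonal blocks of \(\mat{\tilde{D}}\) are identities, so \(\mat{\tilde{D}}\) is unimodular and \(\mat{\tilde{D}}^{-1}\) is polynomial. Hence \(\mat{R} - \mat{\bar{R}}\) is a polynomial left multiple of \(\mat{D}\) if and only if \((\mat{R} - \mat{\bar{R}})\mat{\tilde{D}}^{-1}\) is a polynomial left multiple of \(\mat{\Delta}\); equivalently, for each \(j \in \{0, \ldots, s-1\}\), its block column \(j\) must be a polynomial left multiple of \(\mat{D}_j\).

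Task (iii) is the crux. For \(0 \le j \le t' < s\), the identity \(\mat{E}_{t'}\mat{R}^{(t',j)} = \mat{W}_{t',j}\mat{D}_j\) should hold for some polynomial matrix \(\mat{W}_{t',j}\). Indeed, \(\mat{E}_{t'}\mat{R}^{(t',j)}\) represents the composite \(\Ya(H_{t'}) \hookrightarrow \Ya(-t'G) \to \Ya(-jG)\), whose second arrow is multiplication by \(R^{t'-j} \in \Ya((t'-j)G)\). Since \(H_{t'} + (t'-j)G = H_j\), the image of this composite lies in \(\Ya(H_j)\); together with \(G_j - H_j = jD \ge 0\), which gives \(\Ya(H_j) \subseteq \Ya(G_j)\), the composite factors through \(\Ya(G_j) \hookrightarrow \Ya(-jG)\) (represented by \(\mat{D}_j\)), yielding the claimed \(\mat{W}_{t',j}\). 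Combining this with \(\mat{R}^{(t,j)} - \mat{\bar{R}}^{(t,j)} = \mat{P}^{(t,j)}\mat{E}_j\) for a polynomial \(\mat{P}^{(t,j)}\), and using that \(\mat{\tilde{D}}^{-1}\) is polynomial, each block column \(j\) of \((\mat{R}-\mat{\bar{R}})\mat{\tilde{D}}^{-1}\) expands to a sum of terms \(c\,\mat{P}^{(t,t')}\mat{E}_{t'}\mat{R}^{(t',j)} = c\,\mat{P}^{(t,t')}\mat{W}_{t',j}\mat{D}_j\) for polynomial scalars \(c\), which is manifestly a polynomial left multiple of \(\mat{D}_j\).

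The main conceptual obstacle is task (iii): spotting the submodule chain \(\Ya(H_j) \subseteq \Ya(G_j) \subseteq \Ya(-jG)\) that allows one to pass from left multiples of \(\mat{E}_j\) to left multiples of \(\mat{D}_j\). Once this module inclusion and the factorization \(\mat{D} = \mat{\Delta}\mat{\tilde{D}}\) are identified, the remainder reduces to routine block-matrix bookkeeping.
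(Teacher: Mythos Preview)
Your argument is correct in substance, and the module-theoretic identity in task (iii) --- that multiplication by \(R^{t'-j}\) carries \(\Ya(H_{t'})\) into \(\Ya(H_j) \subseteq \Ya(G_j)\), forcing \(\mat{E}_{t'}\mat{R}^{(t',j)} = \mat{W}_{t',j}\mat{D}_j\) --- is indeed the heart of the matter.  There is, however, one step you assert without justification: you claim that block column \(j\) of \((\mat{R}-\mat{\bar{R}})\mat{\tilde{D}}^{-1}\) expands as a sum of terms \(c\,\mat{P}^{(t,t')}\mat{E}_{t'}\mat{R}^{(t',j)}\).  This requires the \((t',j)\)-block of \(\mat{\tilde{D}}^{-1}\) to be a \emph{scalar} multiple of \(\mat{R}^{(t',j)}\); merely knowing that \(\mat{\tilde{D}}^{-1}\) is polynomial gives an arbitrary polynomial \(\mu\times\mu\) block there, to which your identity from task (iii) would not apply.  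The missing fact is \([\mat{\tilde{D}}^{-1}]_{t',j} = \binom{t'}{j}\mat{R}^{(t',j)}\), which follows from \(\mat{R}^{(t,k)}\mat{R}^{(k,j)} = \mat{R}^{(t,j)}\) together with binomial inversion; once stated, your argument goes through.

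The paper takes a shorter route that bypasses both the factorization \(\mat{D}=\mat{\Delta}\mat{\tilde{D}}\) and any analysis of \(\mat{\tilde{D}}^{-1}\).  Having observed (as you do) that \(\mat{R}-\mat{\bar{R}}\) is a left multiple of the block-diagonal matrix \(\mat{E}\) with blocks \(\mat{E}_0,\ldots,\mat{E}_{s-1}\), it proves directly that \(\mat{E}\) is a left multiple of \(\mat{D}\): row \(t\mu+i\) of \(\mat{E}\) is \(\varphi_\ell^{-1}(z^t\,\y[H_t]_i)\), and \(z^t\,\y[H_t]_i\) lies in \(\M\) because every function in \(\Ya(H_t)=\Ya(-sD-tG)\) vanishes to order \(\ge s\) at each \(P_k\); since its \(z\)-degree is \(t<s\), \cref{thm:M-description} then places it in \(\bigoplus_{j<s}(z-R)^j\Ya(G_j)\), which is exactly the \(\xring\)-row span of \([\mat{D}\;\;\zmat]\).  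This packages the same module inclusion you exploit, but as a single membership statement in \(\M\) rather than a block-by-block matrix identity, and it avoids the need to identify \(\mat{\tilde{D}}^{-1}\) explicitly.
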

\begin{proof}
  The existence of \(\mat{\bar{R}}^{(t,j)}\) with the specified properties
  follows directly from \cref{lem:matrix_quorem}. The bound on
  \(\deg(\mat{\bar{R}})\) follows from \(\deg(\mat{E}_j) \in
  \bigO(s(n+g) / \mu)\), proved in \cref{lem:deg_E_t}. By construction,
  \(
    \mat{R} - \mat{\bar{R}}
    =
    [
      \gamma_{t,j} (\mat{R}^{(t,j)} - \mat{\bar{R}}^{(t,j)})
    ]_{s \le t \le \ell, 0 \le j < s}
  \)
  is a left multiple of
  \[
    \mat{E} =
    \begin{bmatrix}
      \mat{E}_0 \\
      & \mat{E}_1 \\
      & & \ddots \\
      & & & \mat{E}_{s-1}
    \end{bmatrix}
    \in \xmatspace{(\mu s)}{(\mu s)},
  \]
  hence it remains to prove that \(\mat{E}\) is itself a left multiple of
  \(\mat{D}\).

  We now show that each row of \(\mat{E}\) is a left multiple of \(\mat{D}\).
  Let \(0\le i < \mu\) and \(0 \le t < s\). Similarly to the considerations in
  \cref{sec:interpolant_module:first_Fqx_basis}, we observe that the row
  \(t\mu+i\) of \(\mat{E}\) corresponds to the polynomial \(z^t \y[H_t]_i \in
  z^t \Ya(H_t)\), which is in \(\M\) since any function in \(\Ya(H_t) =
  \Ya(-sD-tG)\) has valuation at least \(s\) at each of the places \(P_1,
  \ldots, P_n\). Since \(z^t \y[H_t]_i\) has \(z\)-degree less than \(s\), it is in \(\bigoplus_{0 \le j < s} (z-R)^j
  \Ya(G_j)\) by
  \cref{thm:M-description}, which means that the row \(t\mu+i\) of \(\mat{E}\) is a left
  multiple of \(\mat{D}\).
\end{proof}

\section{Efficient construction of a polynomial matrix basis of \texorpdfstring{\(\M\)}{M\_\{s,l,r\}}}
\label{sec:algorithms}

\subsection{Computing multiplication maps}
\label{sec:algorithms:basis_products}

Consider the following problem: given two divisors $A$ and $B$ and a function
$a \in \Ya(A)$, compute the products $y_0^{(B)}a,\dots,y_{\mu-1}^{(B)}a \in
\Ya(A+B)$, expressed in the basis $y_0^{(A+B)},\dots,y_{\mu-1}^{(A+B)}$. In
order to do this, we generalize \cite[Algorithm~4]{BRS2022}. We follow the same
approach as in \cite{BRS2022} for showing the correctness and complexity of
this generalization.

\begin{definition}
  For any $Q(z) \in \ffield[z]$, for any rational place $P \in
  \places_{\ffield}$ that is not a pole of any of the coefficients of $Q(z)$,
  and for $\alpha \in \field$, we denote by $Q(P,\alpha)$ the evaluation of
  $Q(\alpha) \in \ffield$ at $P$.
\end{definition}

\begin{definition}
  \label{def:Nae}
  Let $A, B$ be divisors and let $E = E_1 + \cdots + E_N$ for distinct rational
  places $E_1,\dots,E_N$ of $F$ different from $\Pinf$ such that $\supp(A) \cap
  \supp(E)= \emptyset$ and $\supp(B) \cap \supp(E)=\emptyset$. For $a\in
  \Ya(A)$, we define the $\xring$-module
  \begin{align*}
    \N_{A,B,E}(a) & = \{ Q = Q_0 + Q_1 z \in \Ya(A+B) \oplus z\Ya(B) \\
                  &  \text{ such that } Q(P,a(P)) = 0 \text { for all } P \in \supp(E)
    \}.
  \end{align*}
\end{definition}

In the following lemmas, we use the same notation $A$, $B$, $E$ as in
\cref{def:Nae}.

\begin{lemma}
  \label{lem:fxbasis_basismem}
  Let $a \in \Ya(A)$. If $Q = Q_0 + z Q_1 \in \N_{A,B,E}(a)$ with
  \[
    \max\{
    \delta_{A+B}(Q_0),
    \delta_B(Q_1) + \delta_A(a)
    \} < \deg(E) - \deg(A+B),
  \]
  then $Q(a) = 0$, i.e.\ $Q \in (z-a)\Ya(B)$.
\end{lemma}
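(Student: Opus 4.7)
The plan is to study $Q(a) = Q_0 + a Q_1$, interpret it as a function in $\Ya(A+B)$ with many prescribed zeros, and then use a degree count to force it to vanish.

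First I would observe that $Q(a)$ does indeed lie in $\Ya(A+B)$: $Q_0 \in \Ya(A+B)$ by definition of $\N_{A,B,E}(a)$, and $aQ_1 \in \Ya(A)\cdot\Ya(B)\subseteq \Ya(A+B)$ by multiplicativity of the $\Ya$-filtration. From the additivity property $\delta_{A+B}(aQ_1)=\delta_A(a)+\delta_B(Q_1)$ and the triangle-inequality-type bound on $\delta_{A+B}$, I get
\[
  \delta_{A+B}(Q(a)) \le \max\{\delta_{A+B}(Q_0),\,\delta_A(a)+\delta_B(Q_1)\} < \deg(E) - \deg(A+B).
\]

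Next I would translate the interpolation condition $Q(P,a(P))=0$ for $P\in\supp(E)$ into the statement that $Q(a)=Q_0+aQ_1$ vanishes at each of the rational places $E_1,\dots,E_N$. Setting $m=\delta_{A+B}(Q(a))$, this means $Q(a)\in \L(m\Pinf + (A+B) - E)$. If $Q(a)$ were nonzero, the existence of a nonzero element in this Riemann-Roch space would force $\deg(m\Pinf + (A+B) - E) \ge 0$, i.e.\ $m \ge \deg(E)-\deg(A+B)$, contradicting the strict inequality above. Hence $Q(a)=0$.

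Finally, $Q_0 + aQ_1 = 0$ gives $Q_0 = -aQ_1$, so $Q = Q_0 + zQ_1 = (z-a)Q_1 \in (z-a)\Ya(B)$, as claimed.

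There is no real obstacle here; the only subtlety is to keep the disjointness of supports from \cref{def:Nae} in mind (so that evaluating at places of $E$ is well defined and $E$ does not interact with $A+B$), and to use the elementary fact that a nonzero element of $\L(D)$ forces $\deg(D)\ge 0$. The structural ingredients, namely multiplicativity of $\delta$, inclusions of Riemann-Roch spaces, and a degree-versus-zeros count, are all standard.
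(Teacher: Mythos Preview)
Your argument is correct and follows essentially the same route as the paper: bound $\delta_{A+B}(Q(a))$ via the max of $\delta_{A+B}(Q_0)$ and $\delta_A(a)+\delta_B(Q_1)$, use the vanishing at the $E_j$ together with the support disjointness to place $Q(a)$ in $\L(m\Pinf + A + B - E)$, and conclude from the negative degree that this space is trivial. The only addition is your explicit final line rewriting $Q$ as $(z-a)Q_1$, which the paper leaves implicit.
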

\begin{proof}
  Since $Q \in \N_{A,B,E}(a)$, we have $Q(a) \in \Ya(A+B)$. Hence by definition
  of $\delta_{A+B}$, we have $Q(a) \in \L(\delta_{A+B}(Q(a))\Pinf+A+B)$. Since
  for all $E_j \in \supp(E)$, we have $Q(a)(E_j)=0$ and $\supp(E) \cap
  (\supp(A) \cup \supp(B) \cup \{\Pinf\})= \emptyset$, we conclude that
  $Q(a) \in \L(\delta_{A+B}(Q(a))\Pinf+A+B-E)$.  Moreover,
\begin{align*}
    \delta_{A+B}(Q(a))
    & \;\leq\;  \max\{ \delta_{A+B}(Q_0), \delta_{A+B}(Q_1 a) \} \\
    & \;=\; \max\{ \delta_{A+B}(Q_0), \delta_B(Q_1) + \delta_A(a) \} \\
    & \;<\; \deg(E) - \deg(A+B),
\end{align*}
which ensures that the aforementioned Riemann-Roch space is trivial.
\end{proof}

Like in \cite{BRS2022}, our generalization will use the notion of an
\emph{$x$-partition} of $E$. We recall the definition, see also
\cite[Definition~V.4]{BRS2022}; the existence of an $x$-partition of $E$ was
shown in \cite[Lemma~V.6]{BRS2022}.

\begin{definition}
  If $E = E_1 + \cdots + E_N$, where $E_1,\dots,E_N$ are distinct rational
  places different from $\Pinf$, and $U_0,\dots,U_{\mu-1}$ are effective divisors
  satisfying
  \begin{enumerate}
  \item $E = U_0 + \dots + U_{\mu-1}$,
  \item $\supp(U_i) \cap \supp(U_j) = \emptyset$ for all $i \neq j$,
  \item $|\deg(U_i) - \deg(U_j)| \leq 1$ for all $i,j$,
  \item for any $E_j,E_k \in \supp(U_i)$ it holds that
    $x(E_j) = x(E_k) \Leftrightarrow E_j = E_k$,
  \end{enumerate}
  then we will say that $U_0,\dots,U_{\mu-1}$ is an \emph{$x$-partition} of $E$.
\end{definition}

\begin{definition}
  For a polynomial matrix \(\mat{A} \in \xmatspace{2\mu}{\mu}\) and
  polynomials \(u_0,\ldots,u_{\mu-1} \in \xring\setminus\{0\}\), we define
    \(\H_{\vec{u}}(\mat{A})\)
    as
  \[
    \Big\{ \vec{v} \in \xmatspace{1}{2\mu} \mid \vec{v} \mat{A}_{*,k} = 0 \bmod u_k \quad\text{for } 0\le k < \mu\Big\},
  \]
  where \(\mat{A}_{*,k}\) is the column \(k\) of \(\mat{A}\).
\end{definition}

Note that we have the following inclusion of \(\xring\)-submodules:
\[
  (\textstyle\prod_{0\le k < \mu} u_k) \xmatspace{1}{2\mu}
  \;\;\subseteq\;\;
  \H_{\vec{u}}(\mat{A})
  \;\;\subseteq\;\;
  \xmatspace{1}{2\mu}.
\]
In particular, \(\H_{\vec{u}}(\mat{A})\) is a free \(\xring\)-module of rank
\(2\mu\), and each of its bases can be represented as a nonsingular
\(2\mu\times2\mu\) matrix over \(\xring\).

\begin{lemma}
  \label{lem:fxbasis_pade}
  Let $a \in \Ya(A)$. Let $U_0,\dots,U_{\mu-1}$ be an $x$-partition of $E$, and let
  $\mat{S} = [S_{i,k}]$ and $\mat{T} = [T_{i,k}]$ be matrices in
  $\xmatspace{\mu}{\mu}$ such that
  \begin{align*}
    S_{i,k}(x(E_j)) = \y[A+B]_i(E_j)
  \end{align*}
  and
  \[
    T_{i,k}(x(E_j)) = a(E_j)\y[B]_i(E_j)
  \]
    for \(E_j \in U_k\).
  If $\vec{u} = (u_0,\dots,u_{\mu-1}) \in \xring^\mu$, where $u_k = \prod_{E_j \in
  \supp(U_k)}(x - x(E_j))$, then the map
  \[
    \psi: \sum_{i = 0}^{\mu-1}(s_i \y[A+B]_i + t_i z \y[B]_i) \mapsto (s_0,\dots,s_{\mu-1},t_0,\dots,t_{\mu-1})
  \]
  is an $\xring$-isomorphism between $\N_{A,B,E}(a)$ and
  $\H_{\vec{u}}(\mat{A})$, where
  \[
    \mat{A} =
    \begin{bmatrix}
      \mat{S} \\
      \mat{T}
    \end{bmatrix}
    \in \xmatspace{2\mu}{\mu}.
  \]
\end{lemma}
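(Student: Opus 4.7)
The plan is to recognize $\psi$ as a restriction of the obvious coordinate $\xring$-isomorphism $\Phi : \Ya(A+B) \oplus z\Ya(B) \to \xmatspace{1}{2\mu}$ coming from the chosen $\xring$-bases $\y[A+B]_0,\ldots,\y[A+B]_{\mu-1}$ of $\Ya(A+B)$ and $\y[B]_0,\ldots,\y[B]_{\mu-1}$ of $\Ya(B)$. Since $\Phi$ is already an $\xring$-linear bijection, it suffices to show that $Q \in \N_{A,B,E}(a)$ if and only if $\Phi(Q) \in \H_{\vec{u}}(\mat{A})$: then $\psi = \Phi|_{\N_{A,B,E}(a)}$ is an $\xring$-isomorphism onto $\H_{\vec{u}}(\mat{A})$.

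First I would translate the interpolation conditions. Write $Q = Q_0 + zQ_1$ with $Q_0 = \sum_i s_i \y[A+B]_i \in \Ya(A+B)$ and $Q_1 = \sum_i t_i \y[B]_i \in \Ya(B)$, so $\Phi(Q) = \vec{v} = (s_0,\ldots,s_{\mu-1},t_0,\ldots,t_{\mu-1})$. For any rational place $P \in \supp(E)$, the hypotheses ensure that $P$ is not a pole of $a$, of any $\y[A+B]_i$, or of any $\y[B]_i$. Since each $s_i,t_i \in \xring$ evaluates at $P$ as $s_i(x(P))$ and $t_i(x(P))$, the condition $Q(P,a(P)) = 0$ becomes
\[
  \sum_{i=0}^{\mu-1} s_i(x(P))\, \y[A+B]_i(P) \;+\; \sum_{i=0}^{\mu-1} t_i(x(P))\, a(P)\, \y[B]_i(P) \;=\; 0.
\]

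Next I would use the $x$-partition to convert these pointwise identities into the modular conditions defining $\H_{\vec{u}}(\mat{A})$. For the unique $k$ with $P = E_j \in U_k$, by definition of $\mat{S}$ and $\mat{T}$ the above sum equals $(\vec{v} \mat{A}_{*,k})(x(E_j))$. Hence $Q \in \N_{A,B,E}(a)$ is equivalent to $\vec{v}\mat{A}_{*,k}$ vanishing at $x(E_j)$ for every $E_j \in \supp(U_k)$, for each $k$. Property~4 of the $x$-partition guarantees that the values $\{x(E_j) : E_j \in \supp(U_k)\}$ are pairwise distinct, so their product $u_k = \prod_{E_j \in \supp(U_k)}(x - x(E_j))$ is a squarefree polynomial with exactly these roots. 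Therefore the simultaneous vanishing is equivalent to $\vec{v}\mat{A}_{*,k} \equiv 0 \pmod{u_k}$ for every $k$, which is exactly the defining condition of $\H_{\vec{u}}(\mat{A})$.

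Combining the two steps yields $Q \in \N_{A,B,E}(a) \iff \Phi(Q) \in \H_{\vec{u}}(\mat{A})$, hence $\psi$ is an $\xring$-isomorphism as claimed. The only real subtlety is the pointwise-to-modular reduction in the last step, which relies crucially on the distinctness of the $x(E_j)$ within each block $U_k$; the rest is a direct bookkeeping of coordinates in the chosen bases.
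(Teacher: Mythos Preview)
Your proof is correct and follows essentially the same approach as the paper: both recognize $\psi$ as the restriction of the coordinate isomorphism $\Ya(A+B)\oplus z\Ya(B)\to\xmatspace{1}{2\mu}$ and then verify the equivalence $Q\in\N_{A,B,E}(a)\iff\psi(Q)\in\H_{\vec{u}}(\mat{A})$ via the identity $Q(E_j,a(E_j))=(\psi(Q)\cdot\mat{A}_{*,k})(x(E_j))$. You are slightly more explicit than the paper in invoking Property~4 of the $x$-partition to justify the passage from pointwise vanishing to divisibility by $u_k$, but the argument is otherwise identical.
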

\begin{proof}
  Clearly $\psi$ is an $\xring$-isomorphism between $\Ya(A+B) \oplus z \Ya(B)$
  and $\xring^{2\mu}$, therefore it suffices to show that for any $Q \in
  \Ya(A+B) \oplus z \Ya(B)$ it holds that $Q \in \N_{A,B,E}(a)$ if and only if
  $\psi(Q) \in \H_{\vec{u}}(\mat{A})$, i.e.\ that for all $k=0,\dots,\mu-1$,
  $Q(E_j,a(E_j)) = 0$ for all $E_j \in \supp(U_k)$ if and only if $\psi(Q)
  \cdot \mat{A}_{*,k} = 0 \bmod u_k$. This is true since for every $E_j \in U_k$ the following
  identity holds, where $\alpha = x(E_j)$:
  \begin{align*}
    & Q(E_j, a(E_j)) \\
    &= \sum_{i=0}^{\mu-1}\big( s_i(\alpha) \y[A+B]_i(E_j) + a(E_j)t_i(\alpha)\y[B]_i(E_j) \big ) \\
    &= \sum_{i=0}^{\mu-1}\big( s_i (\alpha) S_{i,k}(\alpha) + t_i(\alpha)T_{i,k}(\alpha) \big) \\
    &= (\psi(Q) \cdot \mat{A}_{*,k})(\alpha). \qedhere
  \end{align*}
\end{proof}

\begin{lemma}
  \label{lem:popov-submatrix}
  Keeping notation as in \cref{lem:fxbasis_pade}, let $\mat{P} \in
  \xmatspace{(2\mu)}{(2\mu)}$ be the $\vec{d}$-Popov basis of
  $\H_{\vec{u}}(\mat{A}) = \psi(\N_{A,B,E}(a))$, where $\deg(E) \geq 2g  + \mu
  + \delta_A(a) + \deg(A)$ and
  \(\vec{d} = \textstyle\frac{1}{\mu}(\vec{e} + (\deg(B),\ldots,\deg(B))) \in \textstyle\frac{1}{\mu}\ZZ^{2\mu}\) with
  \begin{align*}
    \vec{e} =  \big( &\delta_{A+B}(\y[A+B]_0), \dots, \delta_{A+B}(\y[A+B]_{\mu-1}),\\
    &  \delta_B(\y[B]_0) + \delta_A(a), 
  \dots, \delta_B(\y[B]_{\mu-1}) + \delta_A(a)\big) \in \ZZ^{2\mu}.
  \end{align*}
  Then exactly $\mu$ rows of $\mat{P}$ have $\vec{d}$-degree less than
  $\frac{1}{\mu}(\deg(E) - \deg(A))$. Furthermore, if $\tilde{\mat{P}} \in
  \xmatspace{\mu}{(2\mu)}$ is the submatrix of $\mat{P}$ consisting of these
  rows, then for $k = 0,\dots,\mu-1$ the row \(k\) of $\tilde{\mat{P}}$ is
  $\psi(Y_k)$, where
  \[
    Y_k = -a\y[B]_{k} +z\y[B]_{k} \in (z - a)\Ya(B) \subset \N_{A,B,E}(a).
  \]
  Consequently, if $\tilde{\mat{P}} = [\mat{P}_1 \;\; \mat{P}_2]$, where
  $\mat{P}_1$ and $\mat{P}_2$ are in $\xring^{\mu \times \mu}$, then
  $a\y[B]_{k} = \sum_{i=0}^{\mu-1} p_{k,i} \y[A+B]_i$, where
  $(p_{k,0},\dots,p_{k,\mu-1})$ is the row \(k\) of $-\mat{P}_1$.
\end{lemma}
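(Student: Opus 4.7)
The plan is to identify the ``small \(\vec{d}\)-degree'' portion of \(\mat{P}\) with the image \(\psi((z-a)\Ya(B))\subseteq \H_{\vec{u}}(\mat{A})\), and then to realize this submodule explicitly as the row span of the matrix \(\mat{Y}\) whose rows are \(\psi(Y_0),\dots,\psi(Y_{\mu-1})\).

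First I would translate the hypothesis of \cref{lem:fxbasis_basismem} into an inequality on the shift. For \(Q=Q_0+zQ_1\) with \(Q_0=\sum_i s_i\y[A+B]_i\) and \(Q_1=\sum_i t_i\y[B]_i\), the integers \(\mu\deg(s_i)+\delta_{A+B}(\y[A+B]_i)\) all lie in distinct residue classes modulo \(\mu\) (by the defining property of the Ap\'ery system), so no cancellation occurs and \(\delta_{A+B}(Q_0)\) equals their maximum; the analogous statement holds for \(\delta_B(Q_1)\). Combined with the definition of \(\vec{d}\), this shows that the \(\vec{d}\)-degree of \(\psi(Q)\) equals
\[
  \frac{1}{\mu}\max\bigl(\delta_{A+B}(Q_0),\,\delta_B(Q_1)+\delta_A(a)\bigr)+\frac{\deg(B)}{\mu}.
\]
Hence the strict bound ``\(\vec{d}\)-degree of \(\psi(Q)\) less than \(\frac{1}{\mu}(\deg(E)-\deg(A))\)'' is exactly the hypothesis of \cref{lem:fxbasis_basismem}, so every such \(Q\) lies in \(\psi((z-a)\Ya(B))\). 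In particular each \(\psi(Y_k)\) lies in \(\H_{\vec{u}}(\mat{A})\) with \(\vec{d}\)-degree \(\frac{1}{\mu}(\delta_A(a)+\delta_B(\y[B]_k)+\deg(B))\); the bound \(\delta_B(\y[B]_k)\le 2g-1-\deg(B)+\mu\) from \cref{lem:Fx_basis} together with the assumption \(\deg(E)\ge 2g+\mu+\delta_A(a)+\deg(A)\) makes this strictly less than the threshold.

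Next I would count the rows of \(\mat{P}\) lying below the threshold. The upper bound ``at most \(\mu\)'' is immediate: any such row belongs to the rank-\(\mu\) module \(\psi((z-a)\Ya(B))\), and rows of a basis are \(\xring\)-linearly independent. For the lower bound, expand \(\psi(Y_k)=\sum_i \beta_{k,i}\mat{P}_{i,*}\) and apply the predictable degree property of the reduced basis \(\mat{P}\): every \(\mat{P}_{i,*}\) occurring with \(\beta_{k,i}\neq 0\) has \(\vec{d}\)-degree at most that of \(\psi(Y_k)\), hence below the threshold; since the \(\mu\) vectors \(\psi(Y_k)\) are \(\xring\)-linearly independent, the combined supports must involve at least \(\mu\) distinct rows of \(\mat{P}\). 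The two bounds pin down exactly \(\mu\) such rows, whose \(\xring\)-span equals \(\psi((z-a)\Ya(B))\).

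The subtle final step identifies \(\tilde{\mat{P}}\) with the explicit matrix \(\mat{Y}=[\mat{C}\;\;\idmat]\), where the left block is defined by \(-a\y[B]_k=\sum_i c_{k,i}\y[A+B]_i\). The identity block forces the \(\vec{d}\)-pivot of row \(k\) of \(\mat{Y}\) to lie at column \(\mu+k\) with monic pivot entry \(1\), while the zeros elsewhere in each pivot column supply the Popov normalization, so \(\mat{Y}\) is already in \(\vec{d}\)-Popov form as a basis of its row span. The submatrix \(\tilde{\mat{P}}\) inherits the Popov conditions from \(\mat{P}\) and, by the row count above, spans the same submodule. Uniqueness of the shifted Popov form then forces \(\tilde{\mat{P}}=\mat{Y}\) once rows are ordered by pivot index, identifying row \(k\) of \(\tilde{\mat{P}}\) with \(\psi(Y_k)\) and yielding \(\mat{P}_1=\mat{C}\); this finally gives \(a\y[B]_k=\sum_i(-c_{k,i})\y[A+B]_i=\sum_i p_{k,i}\y[A+B]_i\) with \((p_{k,0},\dots,p_{k,\mu-1})\) the \(k\)-th row of \(-\mat{P}_1\).
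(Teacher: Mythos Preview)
Your proof is correct and follows essentially the same approach as the paper's: compute the \(\vec{d}\)-degree of \(\psi(Q)\) in terms of \(\delta_{A+B}(Q_0)\) and \(\delta_B(Q_1)+\delta_A(a)\), invoke \cref{lem:fxbasis_basismem} for the ``at most \(\mu\)'' direction and the reducedness of \(\mat{P}\) (your predictable-degree expansion of the \(\psi(Y_k)\)'s) for the ``at least \(\mu\)'' direction, then identify \(\tilde{\mat{P}}\) with \(\mat{Y}=[\mat{C}\;\idmat]\) by checking that \(\mat{Y}\) is already in \(\vec{d}\)-Popov form. The only cosmetic difference is the last step: where you invoke uniqueness of the shifted Popov form of the common row span, the paper writes \(\mat{Y}=\mat{U}\tilde{\mat{P}}\) and reads off \(\mat{U}=\idmat\) from the identity block; both arguments rest on the same observation that \(\mat{Y}\) and \(\tilde{\mat{P}}\) are Popov bases of \(\psi((z-a)\Ya(B))\).
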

\begin{proof}
  We start with some observations on the matrix whose rows are $\psi(Y_k)$ for
  $k = 0, \ldots, \mu-1$. This is a \(\mu \times (2\mu)\) matrix over
  \(\xring\), whose rank is \(\mu\) since $Y_0,\dots,Y_{\mu-1}$ are
  \(\xring\)-linearly independent. By construction, its \(\mu \times \mu\)
  rightmost submatrix is the identity matrix. Writing $Y_k =
  -\sum_{i=0}^{\mu-1} w_i\y[A+B]_i + z \y[B]_{k}$, where $w_i \in \xring$, the
  fact that $Y_k(a) = 0$ implies
  \begin{align*}
    \max_i \delta_{A+B}(w_i\y[A+B]_i) & = \delta_{A+B}\left(\sum_{i=0}^{\mu-1} w_i\y[A+B]_i\right) \\
                                      & =  \delta_{A+B}(a \y[B]_{k})
                                      = \delta_B(\y[B]_{k}) + \delta_A(a).
  \end{align*}
  Consequently, $\deg_{\vec{d}}(\psi(Y_k)) = \frac{1}{\mu}(\delta_B(\y[B]_{k})
  + \delta_A(a)+\deg(B))$, and this \(\vec{d}\)-degree is reached at index
  \(\mu+k\). This shows that $\mu + k$ is the $\vec{d}$-pivot index of the row
  $\psi(Y_k)$. This property combined with the special shape (with an identity
  submatrix) of the matrix formed by the $\psi(Y_k)$'s ensure that this matrix
  is in \(\vec{d}\)-Popov form.

  Furthermore, since for $k = 0,\dots,\mu-1$, \(\psi(Y_k)\) is in
  $\H_{\vec{u}}(\mat{A})$ with
  \[
    \deg_{\vec{d}}(\psi(Y_k)) < \frac{1}{\mu}(\delta_A(a) + 2g  + \mu) \leq \frac{1}{\mu}(\deg(E) - \deg(A)),
  \]
  where the strict inequality is due to
  \cref{lem:Fx_basis}, then at least $\mu$ rows of $\mat{P}$ have
  $\vec{d}$-degree less than $\frac{1}{\mu}(\deg(E) - \deg(A))$, because
  $\mat{P}$ is $\vec{d}$-row reduced.

  Now, for any $Q = Q_0 + z Q_1 \in \N_{A,B,E}(a),$ where $Q_0 =
  \sum_{i=0}^{\mu-1} s_i\y[A+B]_i \in \Ya(A+B)$ and $Q_1 = \sum_{i=0}^{\mu-1}
  t_i\y[B]_i \in \Ya(B)$ with $s_i,t_i \in \xring$, it holds that
  \(\deg_{\vec{d}}(\psi(Q))\) is equal to
  \begin{align*}
    &\max\Bigg\{
      \max_i \left( \deg(s_i) + \frac{\delta_{A+B}(\y[A+B]_i) + \deg(B)}{\mu} \right), \\
    & \hphantom{somespac} \max_i\left( \deg(t_i) + \frac{\delta_B(\y[B]_i) + \delta_A(a) + \deg(B)}{\mu} \right)
      \Bigg\} \\
    &= \frac{1}{\mu}\big(\max\{ \delta_{A+B}(Q_0), \delta_B(Q_1) + \delta_A(a)\}  + \deg(B)\big).
  \end{align*}
  It then follows from \cref{lem:fxbasis_basismem} that
  \[
    \deg_{\vec{d}}(\psi(Q)) < \frac{1}{\mu}(\deg(E) - \deg(A)) \implies Q \in (z-a)\Ya(B),
  \]
  which means that at most $\mu$ rows of $\mat{P}$ can have $\vec{d}$-degree
  less than $\frac{1}{\mu}(\deg(E) - \deg(A))$, because $(z-a)\Ya(B)$ has rank
  $\mu$ as an $\xring$-module.

  Thus, exactly $\mu$ rows of $\mat{P}$ have $\vec{d}$-degree less than
  $\frac{1}{\mu}(\deg(E) - \deg(A))$, which proves the first claim of the
  lemma. For the second claim, the above observations show that the matrix
  formed by the $\psi(Y_k)$'s is a left multiple \(\mat{U}\tilde{\mat{P}}\) by
  a nonsingular \(\mu \times \mu\) matrix \(\mat{U}\). Yet, since both
  \(\tilde{\mat{P}}\) and the matrix whose rows are the $\psi(Y_k)$'s are in
  \(\vec{d}\)-Popov form, and since the rightmost \(\mu\times\mu\) submatrix of
  the latter is the identity matrix, the only possibility is \(\mat{U} =
  \idmat_{\mu}\), proving the second claim. The last claim is obvious.
\end{proof}

\begin{algorithm}[htp]
  \algoCaptionLabel{BasisProducts}{A,B,E,\vec{a},\vec{x},\vec{y}^{(A+B)},\vec{y}^{(B)}}
  \begin{algorithmic}[1]

    \Require ~ 
      \begin{itemize}[leftmargin=0cm,itemsep=0pt]
        \item divisors $A$ and $B$,
        \item a divisor $E = E_1 + \cdots + E_N$,
          where $E_1,\dots,E_N$ are pairwise distinct rational places such that
          $\supp(E) \cap (\supp(A) \cup \supp(B) \cup \{P_\infty\}) = \emptyset$
        \item evaluations \(\vec{a} = (a_j)_{j=1,\dots,N}\), where \(a_j = a(E_j)\)
          for a function $a \in \Ya(A)$ with known \(\delta_A(a)\) and such that $\deg(E) \geq \deg(A) + \delta_A(a) + 2g  + \mu$,
        \item evaluations $\vec{x} = (x_j)_{j=1,\dots,N}$, where $x_j = x(E_j) \in \field$,
        \item evaluations $\vec{y}^{(A+B)} = (\y[A+B]_{i,j})^{i=0,\dots,\mu-1}_{j=1,\dots,N}$, where $\y[A+B]_{i,j} = \y[A+B]_i(E_j) \in \field$,
        \item evaluations $\vec{y}^{(B)} = (\y[B]_{i,j})^{i=0,\dots,\mu-1}_{j=1,\dots,N}$, where $\y[B]_{i,j} = \y[B]_i(E_j) \in \field$.
    \end{itemize}

    \Ensure matrix \([p_{k,i}] \in \xmatspace{\mu}{\mu}\) of the
    $\xring$-linear map $f \in \Ya(B) \mapsto af \in \Ya(A+B)$
    with respect to the ordered $\xring$-bases $(\y[B]_0,\dots,\y[B]_{\mu-1})$
    for $\Ya(B)$ and $(\y[A+B]_0,\dots,\y[A+B]_{\mu-1})$ for $\Ya(A+B)$, meaning
     $a\y[B]_{k} = \sum_{i=0}^{\mu-1} p_{k,i} \y[A+B]_i$ for all
    \(k \in \{0,\ldots,\mu-1\}\).

    \State\InlineIf{$\vec{a}=\vec{0}$}{\Return matrix $\zmat \in \xmatspace{\mu}{\mu}$}
    \State $U_0,\dots,U_{\mu-1} \gets $ an $x$-partition of $E$
    \State $\mat{S} = [S_{i,k}] \in \xring^{\mu \times \mu} \gets$
    matrix with $S_{i,k}(x_j) = \y[A+B]_{i,j}$ for all \(j\) such that $E_j \in U_k$
    \State $\mat{T} = [T_{i,k}] \in \xring^{\mu \times \mu} \gets$
    matrix with $T_{i,k}(x_j) = a_j \, \y[B]_{i,j}$ for all \(j\) such that $E_j \in U_k$
    \State $\vec{u} = [u_k] \in \xring^{\mu} \gets$
    vector with $u_k = \prod_{j : E_j \in U_k}(x - x_j)$
    \State $\vec{d} \in \frac{1}{\mu}\ZZ^{2\mu} \gets \frac{1}{\mu} (\vec{e} + (\deg(B), \ldots, \deg(B)))$, where
    \Statex \(~~~~~\vec{e} = \big( \delta_{A+B}(\y[A+B]_0),\ldots,\delta_{A+B}(y^{(A+B)}_{\mu-1}),\)
    \Statex \(~~~~~~~~\delta_B(\y[B]_0) + \delta_A(a), \ldots, \delta_B(\y[B]_{\mu-1}) + \delta_A(a)
      \big) \in \ZZ^{2\mu}$
    \State $\mat{P} \in \xmatspace{(2\mu)}{(2\mu)} \gets$ $\vec{d}$-Popov basis of $\H_{\vec{u}}(\mat{A})$ where
    $\mat{A} =
    [\begin{smallmatrix}
        \mat{S} \\
        \mat{T}
      \end{smallmatrix}]
      \in \xmatspace{(2\mu)}{\mu}$
    \State $[\mat{P}_1 \;\; \mat{P}_2] \in \xmatspace{\mu}{(2\mu)} \gets$ the submatrix of $\mat{P}$ consisting of all rows with
    $\vec{d}$-degree less than $\frac{1}{\mu}(\deg(E) - \deg(A))$, where $\mat{P}_1, \mat{P}_2 \in \xring^{\mu \times \mu}$
    \State \Return \(-\mat{P}_1\)
  \end{algorithmic}
\end{algorithm}

\begin{theorem}\label{thm:basis-products}
  \cref{algo:BasisProducts} is correct and costs $\softO(\mu^{\expmm-1}(N+|\deg(A)|))$ operations in $\field$.
\end{theorem}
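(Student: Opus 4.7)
The proof plan is as follows. Correctness splits into two cases. When $\vec{a} = \vec{0}$, all evaluations $a(E_j)$ vanish, so $a \in \L(\delta_A(a)\Pinf + A - E)$; the input hypothesis $\deg(E) \ge \deg(A) + \delta_A(a) + 2g + \mu$ makes the associated divisor have degree strictly less than $0$, forcing $a = 0$. The multiplication map is then identically zero, matching the output $\zmat$. In the generic case, \cref{lem:fxbasis_pade} identifies $\N_{A,B,E}(a)$ with $\H_{\vec{u}}(\mat{A})$ via the isomorphism $\psi$, where $\mat{A}$, $\vec{u}$, and the shift $\vec{d}$ coincide exactly with the objects constructed in Steps 2--6 of the algorithm. \Cref{lem:popov-submatrix} then applies, since its degree hypothesis on $\deg(E)$ matches the algorithm's precondition: precisely $\mu$ rows of the $\vec{d}$-Popov basis $\mat{P}$ have $\vec{d}$-degree below $\frac{1}{\mu}(\deg(E) - \deg(A))$, and the leftmost $\mu \times \mu$ block $-\mat{P}_1$ of those rows yields exactly the coefficients $p_{k,i}$ in $a\y[B]_k = \sum_i p_{k,i}\y[A+B]_i$.

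For the complexity, I bound each step in turn. An $x$-partition of $E$ can be obtained in $\softO(N)$ operations by bucketing according to the values $x_j$. The polynomials $u_k = \prod_{E_j \in U_k}(x - x_j)$, each of degree $\deg(U_k) \le \lceil N/\mu \rceil$, are built simultaneously via subproduct trees in $\softO(N)$. The matrices $\mat{S}$ and $\mat{T}$ are assembled column by column: computing the $\mu$ entries in column $k$ amounts to $\mu$ univariate interpolations at the $\deg(U_k)$ nodes of $U_k$, for a per-column cost of $\softO(\mu \deg(U_k))$ by fast interpolation, hence $\softO(\mu N)$ in total. The shift vector $\vec{d} \in \frac{1}{\mu}\ZZ^{2\mu}$ has amplitude bounded, using \cref{lem:Fx_basis} and the precondition on $\deg(E)$ (which also implies $\delta_A(a) \le N - \deg(A) - 2g - \mu$), by $O((N + |\deg(A)| + g + \mu)/\mu)$. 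After reducing to an integer shift by the column-permutation and rounding trick used in the proof of \cref{thm:general_small_basis}, the $\vec{d}$-Popov basis of $\H_{\vec{u}}(\mat{A})$ is computed by a fast shifted minimal approximant basis algorithm.

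The main obstacle is to confirm that this last computation fits within the target bound $\softO(\mu^{\expmm-1}(N + |\deg(A)|))$. The input consists of a $2\mu \times \mu$ matrix together with $\mu$ column moduli of total degree $N$, and, after scaling by $\mu$, the shift has amplitude in $O(N + |\deg(A)|)$. The complexity of shifted Popov/approximant basis algorithms, such as the one from \cite{NeigerVu2017} combined with partial linearization, is sensitive precisely to the effective dimension $m = 2\mu$ and to the sum of the input's column degrees plus the shift amplitude, yielding a cost of $\softO(\mu^{\expmm-1}(N + |\deg(A)|))$. This bound dominates the preprocessing steps above and gives the announced complexity.
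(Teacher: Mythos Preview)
Your proposal is correct and follows essentially the same approach as the paper: correctness via \cref{lem:popov-submatrix} (with your explicit handling of the $\vec{a}=\vec{0}$ case, which the paper leaves implicit), and complexity dominated by the shifted Popov basis computation in Step~7. The paper's proof is terser: it derives the precise interval $-\deg(A)\le \mu d_i < N-\deg(A)$ for all shift entries (using \cref{lem:Fx_basis} and $\delta_A(a)\ge -\deg(A)$ when $a\neq 0$) and then invokes \cite[Cor.~V.10]{BRS2022} directly, rather than appealing to generic approximant-basis complexity from \cite{NeigerVu2017}; your softer amplitude bound $O((N+|\deg(A)|+g+\mu)/\mu)$ works too since $g+\mu\le N$ follows from the input constraint once $a\neq 0$.
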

\begin{proof}
  Correctness is given by \cref{lem:popov-submatrix}. For complexity, simply
  note that the computational bottleneck lies in Step 8, in which case
  $\delta_A(a) \ge -\deg(A)$ because $a$ is nonzero and $a \in
  \L(\delta_A(a)\Pinf + A)$. By assumption, we have that $N=\deg(E) \geq
  \deg(A) + \delta_A(a) + 2g  + \mu$, hence by \cref{lem:Fx_basis}
  \begin{align*}
    -\deg(A) & \le \delta_{A+B}(\y[A+B]_i)+\deg(B) \\
             & \le 2g-1-\deg(A) +\mu \\
             & < \deg(E)- 2\deg(A) -\delta_A(a)\\
             & \le \deg(E)-\deg(A)=N-\deg(A)
  \end{align*}
  and
  \begin{align*}
    -\deg(A) &\le \delta_A(a) \le \delta_{B}(\y[B]_i)+\delta_A(a)+\deg(B)\\
             & \le 2g-1 +\mu + \delta_A(a)\\
             & \le -1 + \deg(E) - \deg(A) < N-\deg(A).
  \end{align*}

  Since $\deg(u_k) \le N/\mu$ for $k=0,\dots,\mu-1$, then the total complexity of
  the algorithm is given by \cite[Cor.\,V.10]{BRS2022} as
  \begin{align*}
    & \softO \big( \mu^{\expmm-1}\max\{|\deg(E)|,|\deg(E)-\deg(A)|,|\deg(A)|\} \big) \\
    & \subseteq \softO(\mu^{\expmm-1}(N+|\deg(A)|))
  \end{align*}
  operations in $\field$.
\end{proof}

\subsection{Computing a small-degree \texorpdfstring{\(\xring\)}{Fq[x]}-basis of \texorpdfstring{\(\M\)}{M[s,l](D,G)}}
\label{sec:algorithms:construct_Fqx_basis}

\begin{algorithm*}[htp]
  \algoCaptionLabel{InterpolantPolMatBasis}{\vec{r},D,G,E,\vec{x},\vec{y}^{(-tG)},\vec{y}^{(G_t)},\vec{y}^{(H_t)}}
  \begin{algorithmic}[1]

    \Require ~ 
    \begin{itemize}[leftmargin=0cm,itemsep={0pt}]
      \item received word \(\vec{r} \in \field^n\),
      \item the code divisors \(D\) and \(G\),
      \item a divisor $E = E_1 + \cdots + E_N$, where $E_1,\dots,E_N$ are pairwise distinct rational places
        not in $\{P_{\infty}\} \cup \supp(G)$, with $\deg(E) \geq sn + 4g + \mu - 1$,
      \item evaluations $\vec{x} = (x_j)_{j=1,\dots,N}$, where $x_j = x(E_j) \in \field$,
      \item evaluations $\vec{y}^{(-tG)} = (\y[-tG]_{i,j})^{i=0,\dots,\mu-1}_{j=1,\dots,N}$ for \(t=-1,0,\ldots,\ell\),
        where $\y[-tG]_{i,j} = \y[-tG]_i(E_j) \in \field$,
      \item evaluations $\vec{y}^{(G_t)} = (\y[G_t]_{i,j})^{i=0,\dots,\mu-1}_{j=1,\dots,N}$ for \(t=0,\ldots,s-1\),
        where \(G_t = (t-s)D - tG\) and $\y[G_t]_{i,j} = \y[G_t]_i(E_j) \in \field$,
      \item evaluations $\vec{y}^{(H_t)} = (\y[H_t]_{i,j})^{i=0,\dots,\mu-1}_{j=1,\dots,N}$ for \(t=0,\ldots,s-1\),
        where \(H_t = -sD - tG\) and $\y[H_t]_{i,j} = \y[H_t]_i(E_j) \in \field$.
    \end{itemize}

    \Ensure a matrix \(\smatM := [\begin{smallmatrix} \mat{D} & \zmat \\
    \mat{\bar{R}} & \idmat \end{smallmatrix}]\in \xmatspace{\mdim}{\mdim}\) as
    in \cref{thm:general_small_basis}: \(\smatM\) is a basis of \(\M\) seen as
    an \(\xring\)-submodule of \(\bigoplus_{0 \le t \le \ell} z^t \Ya(-tG)\)
    and \(\deg(\mat{\bar{R}})\) has degree in \(\bigO(s(n+g)/\mu)\).

    \State \CommentLine{Compute matrices \(\mat{D}_t\) and \(\mat{E}_t\) in \(\xmatspace{\mu}{\mu}\), see \cref{def:inclusion_map,def:inclusion_map_superset}}
      \label{step:matrix:DtEt_start}
    \For{\(t=0,\ldots,s-1\)}
      \State \(\mat{D}_t \gets \Call{algo:BasisProducts}{(s-t)D,G_t,E,(1,\ldots,1),\vec{x},\vec{y}^{(-tG)},\vec{y}^{(G_t)}}\)
        \label{step:matrix:DtEt_Dt}
      \State \(\mat{E}_t \gets \Call{algo:BasisProducts}{sD,H_t,E,(1,\ldots,1),\vec{x},\vec{y}^{(-tG)},\vec{y}^{(H_t)}}\)
        \label{step:matrix:DtEt_end}
    \EndFor

    \State \CommentLine{Compute matrices \(\mat{R}_1,\ldots,\mat{R}_{\ell}\) in \(\xmatspace{\mu}{\mu}\), see \cref{def:R_t}}
      \label{step:matrix:Rt_start}
    \State \(R \in \Ya(G) \gets \hyperlink{cite.BRS2022}{\textproc{Interpolate}}(\vec{r},D,G,\vec{x},\vec{y}^{(G)})\)
          \Comment{\cite[Algorithm~2]{BRS2022}}
          \label{step:matrix:Interpolate}
    \State \(\vec{\hat{r}} \in \field^N \gets \hyperlink{cite.BRS2022}{\textproc{Evaluate}}(R,E,G,\vec{x},\vec{y}^{(G)})\)
          \Comment{\cite[Algorithm~1]{BRS2022}}
          \label{step:matrix:Evaluate}
    \For{\(t=1,\ldots,\ell\)}
    \State \(\mat{R}_t \gets \Call{algo:BasisProducts}{G,-tG,E,\vec{\hat{r}},\vec{x},\vec{y}^{(-(t-1)G)},\vec{y}^{(-tG)}}\)
      \label{step:matrix:Rt_end}
    \EndFor

    \State \CommentLine{Compute matrix \(\mat{D} \in \xmatspace{(\mu s)}{(\mu s)}\), see \cref{def:Fqx_basis_matrix_of_M}}
          \label{step:matrix:D_start}
    \State \(\mat{D} = [\mat{D}^{(t,j)}]_{0\le t < s, 0 \le j < s}
    \gets \operatorname{Diag}(\mat{D}_0,\ldots,\mat{D}_{s-1})\),
    where \(\mat{D}^{(t,j)} \in \xmatspace{\mu}{\mu}\)
    \For{\(t=1,\ldots,s-1\)}
      \State \InlineFor{\(j=t-1,\ldots,0\)}{\(\mat{D}^{(t,j)} \gets \mat{D}^{(t,j+1)} \mat{R}_{j+1}\)}
          \label{step:matrix:DR_product}
      \State \InlineFor{\(j=t-1,\ldots,0\)}{\(\mat{D}^{(t,j)} \gets (-1)^{t-j} \binom{t}{j} \mat{D}^{(t,j)}\)}
          \label{step:matrix:D_end}
    \EndFor

    \State \CommentLine{Compute matrix \(\mat{\bar{R}} \in \xmatspace{((\ell+1-s)\mu)}{(\mu s)}\), see \cref{def:Fqx_basis_matrix_of_M,thm:specific_small_basis}}%
          \label{step:matrix:R_start}
    \State \(\mat{\bar{R}} = [\mat{\bar{R}}^{(t,j)}]_{s\le t \le \ell, 0 \le j < s} \gets\) zero matrix
    with blocks \(\mat{\bar{R}}^{(t,j)} \in \xmatspace{\mu}{\mu}\)

    \For{\(j=0,\ldots,s-1\)}
      \State \(\mat{\bar{R}}^{(s,j)} \gets \hyperlink{href_lem_matrix_quorem}{\textproc{PM-Rem}}(\mat{R}_s\mat{R}_{s-1} \cdots \mat{R}_{j+1}, \mat{E}_j)\) \Comment{algorithm from \cref{lem:matrix_quorem}}
          \label{step:matrix:R_reminit}
      \State \InlineFor{\(t=s+1,\ldots,\ell\)}{\(\mat{\bar{R}}^{(t,j)} \gets \hyperlink{href_lem_matrix_quorem}{\textproc{PM-Rem}}(\mat{R}_t \mat{\bar{R}}^{(t-1,j)}, \mat{E}_j)\)}
          \label{step:matrix:R_remiter}
      \State \InlineFor{\(t=s,\ldots,\ell\)}{\(\mat{\bar{R}}^{(t,j)} \gets \gamma_{t,j} \mat{\bar{R}}^{(t,j)}\) \Comment{\(\gamma_{t,j}\) defined in \cref{eqn:dfn_gamma}}}
          \label{step:matrix:R_end}
    \EndFor
    \State \Return \([\begin{smallmatrix} \mat{D} & \zmat \\ \mat{\bar{R}} & \idmat \end{smallmatrix}] \in \xmatspace{((\ell+1)\mu)}{((\ell+1)\mu)}\)
  \end{algorithmic}
\end{algorithm*}

\begin{theorem}
  \label{thm:algo_construct_Fqx_basis}
  \cref{algo:InterpolantPolMatBasis} is correct and costs \(\softO(\ell
  \mu^{\expmm-1} N + s^2 \ell \mu^{\expmm-1} (n+g))\) operations in \(\field\).
\end{theorem}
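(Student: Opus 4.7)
The proof splits naturally into a correctness argument and a complexity analysis, each done step by step following the structure of Algorithm \ref{algo:InterpolantPolMatBasis}.

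For correctness, the plan is to verify that each block of the algorithm computes exactly the object prescribed by \cref{def:Fqx_basis_matrix_of_M,thm:specific_small_basis}. The calls to \cref{algo:BasisProducts} in Steps \ref{step:matrix:DtEt_start}--\ref{step:matrix:Rt_end} are instantiations of the multiplication map framework with well-chosen divisor pairs: for $\mat{D}_t$ we use $(A,B)=((s-t)D, G_t)$ with $a=1$, so $A+B=-tG$ and \cref{algo:BasisProducts} returns the matrix of the inclusion $\imath_t$; symmetrically $(A,B)=(sD,H_t)$ with $a=1$ yields $\mat{E}_t$; and $(A,B)=(G,-tG)$ with $a=R$ yields $\mat{R}_t$. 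Here $R$ is reconstructed via \textproc{Interpolate} from \cite{BRS2022} in Step~\ref{step:matrix:Interpolate}, and its values $\vec{\hat r}$ on $E$ are obtained via \textproc{Evaluate} in Step~\ref{step:matrix:Evaluate}; the hypothesis $\deg(E) \geq sn + 4g + \mu - 1$ combined with $\delta_G(R)\le n+2g-1-\deg(G)$ ensures the size condition needed by \cref{thm:basis-products}. Step~\ref{step:matrix:DR_product} then accumulates $\mat{D}^{(t,j)} = \mat{D}_t \mat{R}_t \cdots \mat{R}_{j+1}$ by the trivial right-to-left recursion, and multiplication by the scalar $\gamma_{t,j}=(-1)^{t-j}\binom{t}{j}$ gives the block matrix $\mat{D}$ of \cref{def:Fqx_basis_matrix_of_M}.

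The construction of $\mat{\bar R}$ is where the correctness argument is slightly less immediate: here I would prove by induction on $t$, for each fixed $j$, that the matrix $\mat{\bar R}^{(t,j)}$ computed in Steps \ref{step:matrix:R_reminit}--\ref{step:matrix:R_remiter} satisfies $\mat{R}^{(t,j)} - \mat{\bar R}^{(t,j)} \in \xmatspace{\mu}{\mu} \mat{E}_j$ and $\deg(\mat{\bar R}^{(t,j)}) < \deg(\mat{E}_j)$. The base case $t=s$ is the direct application of \textproc{PM-Rem} from \cref{lem:matrix_quorem}; in the induction step, the identity $\mat{R}^{(t,j)} = \mat{R}_t \mat{R}^{(t-1,j)}$ together with $\mat{R}_t(\mat{R}^{(t-1,j)} - \mat{\bar R}^{(t-1,j)}) \in \xmatspace{\mu}{\mu} \mat{E}_j$ shows that $\mat{R}_t \mat{R}^{(t-1,j)}$ and $\mat{R}_t \mat{\bar R}^{(t-1,j)}$ have the same remainder modulo $\mat{E}_j$. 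Scaling by $\gamma_{t,j}$ and assembling then matches \cref{thm:specific_small_basis}, so that \cref{thm:general_small_basis} applies to the returned matrix.

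For the complexity, I would treat the four stages separately and sum. Each call to \cref{algo:BasisProducts} costs $\softO(\mu^{\expmm-1}(N+|\deg(A)|))$ by \cref{thm:basis-products}: the $s$ computations of $\mat{D}_t$ and of $\mat{E}_t$ involve $|\deg(A)|\le sn$, the $\ell$ computations of $\mat{R}_t$ involve $|\deg(A)|\in \bigO(n+g)$, and \textproc{Interpolate}/\textproc{Evaluate} fit within the cost bound reported in \cite{BRS2022}; summing gives $\softO(\ell\mu^{\expmm-1}N + s^2\mu^{\expmm-1}n + \ell\mu^{\expmm-1}(n+g))$, absorbed into the claimed bound. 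Assembling $\mat{D}$ via the recurrence in Step~\ref{step:matrix:DR_product} uses at most $\bigO(s^2)$ products of $\mu\times\mu$ matrices; by \cref{lem:deg_D_t,lem:deg_R_t} the $t$-th accumulated product has degree in $\bigO(s(n+g)/\mu)$, giving total cost $\softO(s^2 \cdot \mu^{\expmm-1} s(n+g)) = \softO(s^3 \mu^{\expmm-1}(n+g))$, which is dominated by the claimed bound since $s\le \ell$.

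The main obstacle, and the stage I would treat most carefully, is the iterative computation of $\mat{\bar R}^{(t,j)}$. The key point is that thanks to the reduction modulo $\mat{E}_j$ at every iteration, the intermediate matrix $\mat{\bar R}^{(t-1,j)}$ in Step~\ref{step:matrix:R_remiter} has degree strictly less than $\deg(\mat{E}_j)\in\bigO(s(n+g)/\mu)$; hence the product $\mat{R}_t \mat{\bar R}^{(t-1,j)}$ has degree in $\bigO(s(n+g)/\mu)$ (using \cref{lem:deg_R_t,lem:deg_E_t}), so the product costs $\softO(\mu^\expmm \cdot s(n+g)/\mu) = \softO(\mu^{\expmm-1} s(n+g))$, and \cref{lem:matrix_quorem} gives the same bound for the \textproc{PM-Rem} call. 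Similarly the base case at Step~\ref{step:matrix:R_reminit} can be obtained by left-to-right accumulation of the products $\mat{R}_s\cdots\mat{R}_{j+1}$ with the same bound. Summing over all $s\ell$ pairs $(t,j)$ yields the dominant $\softO(s^2 \ell \mu^{\expmm-1}(n+g))$ term, and combining with the BasisProducts cost completes the proof of the announced bound. The degree control granted by \emph{systematically} remaindering at each iteration—rather than forming $\mat{R}^{(t,j)}$ first—is exactly what prevents a blow-up to the $\bigO(\ell^2 s\mu(n+g))$ size discussed in the remark preceding \cref{thm:specific_small_basis}.
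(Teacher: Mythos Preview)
Your proposal is correct and follows essentially the same route as the paper's proof: the same per-step use of \cref{thm:basis-products}, \cref{lem:deg_D_t,lem:deg_R_t,lem:deg_E_t}, and \cref{lem:matrix_quorem}, leading to the same partial cost bounds $\softO(s^3\mu^{\expmm-1}(n+g))$ for building $\mat{D}$ and $\softO(s^2\ell\mu^{\expmm-1}(n+g))$ for building $\mat{\bar R}$. Your correctness argument is in fact more explicit than the paper's (which simply invokes the results of \cref{sec:interpolant_module:small_Fqx_basis}); in particular, spelling out the induction on $t$ showing that $\mat{R}^{(t,j)}-\mat{\bar R}^{(t,j)}$ is a left multiple of $\mat{E}_j$ is a welcome addition that the paper leaves implicit.
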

\begin{proof}
  The correctness follows from the correctness of the called algorithms and
  from the results in \cref{sec:interpolant_module:small_Fqx_basis}.

  For complexity, let us first consider the total cost of the calls to
  \Call{algo:BasisProducts}{}; for completeness we also give at the same time
  the detailed verification that the constraint on \(\deg(E)\) required in the
  input of this algorithm is satisfied.
  \begin{itemize}
    \item
      For \(t=0,\ldots,s-1\), the call at \cref{step:matrix:DtEt_Dt} is for the
      divisor \(A = (s-t)D\) and the function \(a = 1\), and therefore costs
      \(\softO(\mu^{\expmm-1} (N+|\deg((s-t)D)|)) = \softO(\mu^{\expmm-1}
      (N+(s-t)n))\) according to \cref{thm:basis-products}. Over the \(s\)
      iterations, this cost is in \(\softO(s \mu^{\expmm-1} N + s^2
      \mu^{\expmm-1} n)\). Furthermore the input requirements of
      \Call{algo:BasisProducts}{} impose \(\deg(E) \ge \deg((s-t)D) +
      \delta_{(s-t)D}(1) + 2g+\mu\) for all \(t=0,\ldots,s-1\), hence we must
      ensure \(\deg(E) \ge sn + 2g + \mu\); this is implied by the input
      requirements of \Call{algo:InterpolantPolMatBasis}{}.
    \item
      The calls at \cref{step:matrix:DtEt_end} are for the divisor \(A =
      (s-t)D\) and the function \(a = 1\). Thus their total complexity fits
      within the one in the previous item, and these calls do not bring any
      additional restriction on \(\deg(E)\).
    \item
      Finally, the calls to \Call{algo:BasisProducts}{} at
      \cref{step:matrix:Rt_end} are for the divisor \(A = G\) and the function
      \(a = R\), for each of the \(\ell\) iterations. In total, this costs
      \(\softO(\ell \mu^{\expmm-1} (N+|\deg(G)|)) = \softO(\ell \mu^{\expmm-1}
      N + \ell \mu^{\expmm-1} (n+g))\). These calls all add the same constraint
      on \(\deg(E)\), namely \(\deg(E) \ge \deg(G) + \delta_G(R) + 2g + \mu\).
      Since \(\delta_G(R) \le n + 2g - 1 - \deg(G)\) holds by construction of
      \(R\) (see the output specification of \cite[Algorithm~2]{BRS2022}), this
      constraint is satisfied when \(\deg(E) \ge n + 4g + \mu - 1\), and this
      inequality is indeed implied by the input requirements of
      \Call{algo:InterpolantPolMatBasis}{}.
  \end{itemize}

  The interpolation at \cref{step:matrix:Interpolate} costs
  \(\softO(\mu^{\expmm-1} (N+g))\), by \cite[Lemma~V.12]{BRS2022}. The
  evaluation at \cref{step:matrix:Evaluate} costs \(\softO(\mu N + \delta_G(R)
  + \deg(G))\), by \cite[Lemma~V.2]{BRS2022}; our assumption on \(\deg(E)\)
  implies that this is in \(\softO(\mu N)\).

  The costly part of the computation of \(\mat{D}\) at
  \crefrange{step:matrix:D_start}{step:matrix:D_end} is the matrix products at
  \cref{step:matrix:DR_product}. For each \(t=1,\ldots,s-1\), we start from
  \(\mat{D}_t\) which has degree in \(\bigO(s(n+g)/\mu)\) (see
  \cref{lem:deg_D_t}), and then we multiply iteratively for \(j=t-1,\ldots,0\)
  by \(\mat{R}_j\) whose degree is in \(\bigO((n+g)/\mu)\) (see
  \cref{lem:deg_R_t}). Thus, altogether we perform about \(\frac{s^2}{2}\)
  multiplications of two \(\mu\times\mu\) matrices of degree in
  \(\bigO(s(n+g)/\mu)\), for a total cost of \(\softO(s^3 \mu^{\expmm-1}
  (n+g))\).

  The costly part of the computation of \(\mat{\bar{R}}\) at
  \crefrange{step:matrix:R_start}{step:matrix:R_end} is the matrix products and
  matrix remainders at both \cref{step:matrix:R_reminit,step:matrix:R_remiter}.
  Consider a fixed iteration \(j\), for some \(j \in \{0,\ldots,s-1\}\). The
  above recalled bound on \(\deg(\mat{R}_t)\) ensures that the product
  \(\mat{R}_s\mat{R}_{s-1} \cdots \mat{R}_{j+1}\) at
  \cref{step:matrix:R_reminit} can be computed in \(\softO(s^2 \mu^{\expmm-1}
  (n+g))\), and has degree in \(\bigO(s(n+g)/\mu)\). Then, since
  \(\deg(\mat{E}_j)\) is in \(\bigO(s(n+g)/\mu)\) as well (see
  \cref{lem:deg_E_t}), the matrix division with remainder at the same line
  costs \(\softO(s \mu^{\expmm-1} (n+g))\) and returns a matrix whose degree is
  in \(\bigO(s(n+g)/\mu)\) (see \cref{lem:matrix_quorem}). At
  \cref{step:matrix:R_remiter} there are \(\le \ell\) iterations, and similarly
  to \cref{step:matrix:R_reminit}, each of them performs a matrix product and
  then a matrix division with remainder which both cost \(\softO(s
  \mu^{\expmm-1} (n+g))\), for a total of \(\softO(s \ell \mu^{\expmm-1}
  (n+g))\). Note that degrees remain controlled since each of these iterations
  produces a matrix remainder whose degree is less than \(\deg(\mat{E}_j)\),
  which is in \(\bigO(s(n+g)/\mu)\). Summing over the iterations for \(j =
  0,\ldots,s-1\), and using \(s\le \ell\), we get a cost bound of \(\softO(s^2
  \ell \mu^{\expmm-1} (n+g))\) operations in \(\field\) for
  \crefrange{step:matrix:R_start}{step:matrix:R_end}.

  Finally, summing the costs of each analyzed part above
  yields the result.
\end{proof}

\section{Decoder with better complexity}
\label{sec:decoder}

\subsection{The decoding algorithm}
\label{sec:decoder:decoding}

The overall decoding algorithm is the one in \cite[Algorithm~7]{BRS2022} with
the first three lines replaced by a call to
\Call{algo:InterpolantPolMatBasis}{}, which provides \(\pfM\) in complexity
$\softO(s^2 \ell \mu^{\expmm-1}(n+g))$ according to
\cref{thm:algo_construct_Fqx_basis}; indeed one can take \(N \in \bigO(sn + g +
\mu) \subset \bigO(s(n + g))\). After that, two expensive computations remain.
The first one asks to find the shifted Popov form of \(\pfM\)
\cite[Algorithm~7, Line~5]{BRS2022}, which costs
$\softO(s^2\ell^{\expmm-1}\mu^{\expmm-1}(n+g) + \ell^\expmm \mu^\expmm)$ operations in \(\field\)
according to \cref{thm:general_small_basis}. The second one is the root finding
step \cite[Algorithm~7, Line~10]{BRS2022}, whose complexity is in
$\softO(s\ell\mu^{\expmm-1}(n+g))$ as detailed in
\cref{sec:decoder:rootfinding}. Hence the overall cost bound
$\softO(s^2\ell^{\expmm-1}\mu^{\expmm-1}(n+g) + \ell^\expmm \mu^\expmm)$ for the list decoder.

Due to the modification of the first steps, the precomputed data slightly
differs from the one listed in \cite[Section~VI]{BRS2022}. Here, we do not need
to know the evaluations of \(\Ya\)-module generators for \(\Ya(G_t)\), \(t =
0,\ldots,\ell\), denoted by ``\(\vec{g}\)'' in the above reference. As a kind
of replacement, we need the evaluations $\vec{y}^{(G_t)}$ and $\vec{y}^{(H_t)}$
for \(t=0,\ldots,s-1\) and $\vec{y}^{(-tG)}$ for \(t=0,\ldots,\ell\), as
defined in the input of \Call{algo:InterpolantPolMatBasis}{}. Observe that this
algorithm also requires $\vec{y}^{(-tG)}$ for \(t=-1\); but \(\vec{y}^{(G)}\)
is already part of the precomputation in \cite[Section~VI]{BRS2022} (denoted by
\(\vec{y}\)). Except for ``\(\vec{g}\)'', the rest of the precomputed data
listed in \cite[Section~VI]{BRS2022} is kept as such.

\subsection{The root finding step}
\label{sec:decoder:rootfinding}

In \cite{BRS2022}, an algorithm is given that finds all roots of the found
polynomial $Q(z) \in \M$ in complexity $\softO(\ell^2\mu^{\expmm-1}(n+g))$. The
term $\ell^2$ is at odds with our target complexity. Fortunately, a slightly
better complexity analysis shows that \cite[Algorithm~6]{BRS2022} actually has
complexity $\softO(s\ell\mu^{\expmm-1}(n+g))$. More precisely, in the proof of
\cite[Proposition V.33]{BRS2022}, the $\ell^2$ term comes from the estimates
$\softO(\mu \ell\beta) \subseteq \softO(\ell^2\mu(n+g))$ and $\softO(\beta
\deg_z(\hat{Q})) \subseteq \softO(\ell^2(n+g))$, where $\beta$ is chosen such
that $\beta \ge 2 \ell \deg(G) + s (n-\tau)$ and where $\deg_z(\hat{Q})=\ell$.
In \cite{BRS2022} the estimate $\deg(G) \in \bigO(n+g)$ is used to show the
mentioned inclusions. A third part of the complexity analysis of
\cite[Algorithm~6]{BRS2022} adds a term $\softO(\ell \mu^{\expmm-1}(n+g))$,
yielding as total complexity the mentioned $\softO(\ell^2\mu^{\expmm-1}(n+g))$.

However, the root finding has as input a polynomial $Q \in \M$ satisfying
$\delta_G(Q) < s(n-\tau)$. In particular $\delta_{-\ell G}(Q_\ell) <
s(n-\tau)$, which implies that $Q_\ell \in \L(-\ell G+s \Pinf)$. This implies
that either $-\ell \deg(G) +sn \ge 0$, or $Q_\ell=0$ in all cases. In the
latter case one might as well have started the decoding algorithm for a smaller
value of designed list size $\ell$. We may conclude that without loss of
generality one can assume $\ell \deg(G) \le sn$. This implies that $\beta \in
\bigO(sn)$ and therefore $\softO(\mu \ell\beta) \subseteq \softO(s\ell\mu n)$
and $\softO(\beta \deg_z(\hat{Q})) \subseteq \softO(s\ell n)$. Leaving the
remaining part of the complexity analysis exactly the same as in the proof of
\cite[Proposition~V.33]{BRS2022}, we see that the root finding part can be
handled using \cite[Algorithm~6]{BRS2022} in complexity
$\softO(s\ell\mu^{\expmm-1}(n+g))$.

\section*{Acknowledgments}

The first author would like to acknowledge the support from The Danish Council for Independent Research (DFF-FNU) for the project \emph{Correcting on a Curve}, Grant No.~8021-00030B.
The second author would like to acknowledge the support from Sorbonne Universit\'e's Faculty of Science and Engineering through the project \emph{Tremplin 2022: Fast reconstruction of multivariate algebraic relations}; from the Agence nationale de la recherche (ANR), grant agreement ANR-23-CE48-0003-01 CREAM; from the ANR\&Austrian Science Fund FWF, grant agreement ANR-22-CE91-0007 EAGLES.


\begin{IEEEbiographynophoto}{Peter Beelen}
  received the master’s degree in mathematics from the University of Utrecht,
  The Netherlands, in 1996, and the Ph.D. degree in mathematics from the
  Technical University of Eindhoven, The Netherlands, in 2001. Since October
  2004, he has been a Staff Member with the Technical University of Denmark
  (DTU), Kongens Lyngby, Denmark. He has been an Assistant Professor with DTU,
  until January 2007, and an Associate Professor, until August 2014. Since
  September 2014, he has been with DTU as a Professor. His research interests
  include various aspects of algebra and its applications, notably algebraic
  curves, and algebraic coding theory.
\end{IEEEbiographynophoto}

\begin{IEEEbiographynophoto}{Vincent Neiger}
  received the master’s degree in Computer Science from the \'Ecole Normale
  Sup\'erieure de Lyon, France, in 2012, and the Ph.D. degree in Computer
  Science from the Universit\'e de Lyon, France, in 2016. From 2017 to 2021, he
  was a \emph{Ma\^itre de Conf\'erences} at University of Limoges, France.
  Since 2021, he has been a \emph{Ma\^itre de Conf\'erences} at Sorbonne
  Universit\'e, France. His research interests revolve around algebraic
  computations and their applications, with a focus on the design and
  implementation of efficient algorithms for polynomial and matrix
  computations.
\end{IEEEbiographynophoto}

\end{document}